\documentclass[11pt]{article}

\usepackage[margin=1in]{geometry}
\usepackage{amsmath,amssymb,amsthm}
\usepackage{graphicx}
\usepackage{booktabs}
\usepackage{hyperref}
\usepackage{algorithm}
\usepackage{algpseudocode}
\usepackage{bm}
\usepackage{xcolor}
\usepackage[numbers,sort&compress]{natbib}

\newtheorem{proposition}{Proposition}
\newtheorem{lemma}{Lemma}

\newtheorem{remark}{Remark}

\newcommand{\R}{\mathbb{R}}
\newcommand{\bd}{\mathbf{d}}
\newcommand{\bx}{\mathbf{x}}
\newcommand{\bxi}{\bm{\xi}}
\newcommand{\bA}{\mathbf{A}}
\newcommand{\bB}{\mathbf{B}}
\newcommand{\bC}{\mathbf{C}}
\newcommand{\bD}{\mathbf{D}}
\newcommand{\bF}{\mathbf{F}}
\newcommand{\bS}{\mathbf{S}}
\newcommand{\bJ}{\mathbf{J}}
\newcommand{\bPi}{\mathbf{\Pi}}
\newcommand{\bQ}{\mathbf{Q}}
\newcommand{\bI}{\mathbf{I}}
\newcommand{\bZ}{\mathbf{Z}}
\newcommand{\bW}{\mathbf{W}}
\newcommand{\bM}{\mathbf{M}}
\newcommand{\bX}{\mathbf{X}}
\newcommand{\by}{\mathbf{y}}
\newcommand{\bP}{\mathbf{P}}
\newcommand{\bU}{\mathbf{U}}
\newcommand{\bV}{\mathbf{V}}
\newcommand{\bSigma}{\mathbf{\Sigma}}
\newcommand{\bc}{\mathbf{c}}
\newcommand{\bg}{\mathbf{g}}
\newcommand{\bH}{\mathbf{H}}
\newcommand{\bp}{\mathbf{p}}
\newcommand{\bz}{\mathbf{z}}
\newcommand{\pinv}{{\dagger}}
\DeclareMathOperator{\col}{col}
\DeclareMathOperator{\ran}{ran}
\DeclareMathOperator{\rank}{rank}
\DeclareMathOperator{\diag}{diag}
\DeclareMathOperator*{\argmin}{arg\,min}

\title{A Generalized Ridge Regression\\ and Convolutional LASSO}

\author{Shintaro Yoshizawa\\
\small Nagoya Mathematical and Information Science Research
\thanks{This paper is based on material originally presented by the
author, Shintaro Yoshizawa, at the research workshop ``Mathematics
and Development of Statistical Modeling and Computational
Algorithms'' (\emph{T\=okeiteki Moderiningu to Keisan Arugorizumu no
S\=uri to Tenkai}), Graduate School of Informatics, Nagoya
University, February 18, 2017. Correspondence:
\texttt{shintaro.yoshizawa.net@gmail.com}.}}

\date{}

\begin{document}
\maketitle

\begin{abstract}
The Hodrick--Prescott filter (HPF), the classical device for
separating a slowly varying trend from a noisy or cyclical time
series, is a ridge-regression estimator whose penalty matrix is a
discrete second-difference operator $\bD$. Because $\bD$ is
rank-deficient, the trend can be re-parametrized in infinitely many
equivalent ways through a choice of generalized inverse of $\bD$; we
call two natural choices, built from the Moore--Penrose pseudoinverse
and from an alternative (non-minimum-norm) right inverse, the
\emph{A-} and \emph{B-representations}. We give a full, self-contained
account of this duality: we derive the general Penrose-type
parametrization of all admissible representations, prove that the
extracted trend is representation-invariant (Proposition~1), derive
in closed form the Bregman-type divergence that measures how two
representations disagree when driven by the \emph{same} coefficient
vector (Proposition~2), and prove that this divergence between the
HPF trend and the naive least-squares (polynomial) trend vanishes as
the regularization parameter $\lambda\to\infty$ (Proposition~3). We
then turn to the non-differentiable $\ell_1$ analogue of the HPF (the
$\ell_1$ trend filter), which trades quadratic for sparsity-inducing
regularization at the cost of losing smoothness of the objective; we
construct, by mollifying $|\cdot|$ with a compactly supported
biweight kernel, an explicit closed-form $C^2$ (in fact $C^3$)
\emph{Convolutional LASSO} surrogate that recovers Newton-type
quadratic convergence while retaining the qualitative kink-detecting
behaviour of the $\ell_1$ penalty. All theoretical results are
verified numerically, and the Convolutional LASSO trend filter is
benchmarked against a standard ADMM/IRLS $\ell_1$ solver, on real,
publicly available daily closing-price data for a semiconductor-sector
stock, NVIDIA Corporation (NVDA), covering February 2013--February
2018 (the Kaggle ``S\&P~500 stock data'' set). The sparse trend
filter isolates a small number of genuine growth-regime changes in
the five-year log-price series, cleanly separating them from the
smooth curve produced by the plain $L_2$ Hodrick--Prescott filter,
and localizes them close to the widely reported post-earnings jump
of November 11, 2016 ($+29.8\%$ in a single session) and the
subsequent acceleration of NVIDIA's data-center/gaming-driven growth.
\end{abstract}

\tableofcontents

\section{Introduction}

\subsection{Background and motivation}

The Hodrick--Prescott filter \citep{hodrick1997postwar} is one of the
most widely used trend-extraction devices in applied macroeconomics,
and, viewed purely as an estimator, it is nothing but a ridge
regression in which the design matrix is the identity and the penalty
matrix is a second-difference (discrete curvature) operator. Given an
observed series $\bd=(d_1,\dots,d_N)^\top$, the HPF trend is
\begin{equation}
  \bx_{\mathrm{opt}}(\lambda)
  = \argmin_{\bx\in\R^N} \; f_\lambda(\bx), \qquad
  f_\lambda(\bx) = \|\bd-\bx\|_2^2 + \lambda\|\bD\bx\|_2^2 ,
  \label{eq:hpf-primal}
\end{equation}
where $\bD\in\R^{(N-2)\times N}$ is the second-difference matrix (each
row is $(\dots,0,1,-2,1,0,\dots)$) and $\lambda>0$ is the smoothing
parameter. Because $f_\lambda$ is strictly convex ($\bI_N+\lambda
\bD^\top\bD\succ 0$), the minimizer in \eqref{eq:hpf-primal} is unique
and given by the closed-form ridge solution
$\bx_{\mathrm{opt}}(\lambda)=(\bI_N+\lambda\bD^\top\bD)^{-1}\bd$.

A less obvious fact --- which is the starting point of this paper ---
is that \eqref{eq:hpf-primal} admits a family of exactly equivalent
\emph{reparametrized} representations
\begin{equation}
  \tilde f_\lambda(\bxi)
  = \|\bd-\bC\bxi\|_2^2 + \lambda\|\bJ\bxi\|_2^2 , \qquad
  \bx = \bC\bxi ,
  \label{eq:hpf-reparam}
\end{equation}
where $\bJ=(\mathbf 0_{(N-2)\times2}\mid \bI_{N-2})$ simply selects the
last $N-2$ coordinates of $\bxi$, and $\bC\in\R^{N\times N}$ is any
invertible matrix satisfying $\bD\bC=\bJ$. Such $\bC$ exist in
abundance because $\bD$ has a $2$-dimensional kernel (the space of
discrete linear functions) and hence a whole affine family of
generalized (\,right-\,) inverses. Two natural choices considered
below --- one built from the Moore--Penrose pseudoinverse $\bD^\pinv$
(the ``B-representation''), one from an alternative, non-minimum-norm
right inverse of $\bD$ (the ``A-representation'') --- were introduced,
alongside the general Penrose-type parametrization of \emph{all} such
$\bC$, in the original presentation on which this paper is based
\citep{yoshizawa2017gotemba}. The present paper (i) supplies complete
proofs of the invariance and divergence properties of this
representation family, (ii) makes the associated Bregman-type
divergence formula precise (Section~\ref{sec:divergence}), and (iii)
develops a companion contribution --- the \emph{Convolutional LASSO}
--- that upgrades the non-smooth $\ell_1$ analogue of the HPF into an
explicitly differentiable, Newton-solvable convex program.

\subsection{Why does the representation matter?}

At first sight, \eqref{eq:hpf-reparam} looks like an unnecessary
complication: since $\bC$ is invertible, $\bxi=\bC^{-1}\bx$ and the
reparametrized objective is literally $\tilde
f_\lambda(\bC^{-1}\bx)=f_\lambda(\bx)$ termwise (because $\bD\bC=\bJ$
makes $\bJ\bC^{-1}=\bD$), so the two problems are trivially the same
optimization problem written in different coordinates. The interest
is not in the value of the objective but in what happens \emph{when
the two representations are queried inconsistently} -- e.g.\ when a
practitioner fits or regularizes the coefficient vector $\bxi$
directly (as one does, for instance, in a Bayesian or empirical-Bayes
treatment of the trend/cycle decomposition, or when only a subset of
$\bxi$ is estimated and the rest is held fixed across representations).
Proposition~\ref{prop:divergence} below shows that plugging the
\emph{same} $\bxi$ into two different representations $\bA$ and $\bB$
produces two \emph{different} trends $\bx_1=\bA\bxi$, $\bx_2=\bB\bxi$,
whose discrepancy is exactly quantified by a Bregman divergence of the
quadratic $f_\lambda$. Understanding this discrepancy is a
prerequisite for using generalized-inverse parametrizations of the
HPF safely, e.g.\ in generalized/vector versions of the filter, in
state-space (Kalman-filter) implementations that carry $\bxi$ rather
than $\bx$ as the latent state, or in the sparsity-generalized
(LASSO) filters discussed in Section~\ref{sec:convlasso}, where the
choice of coordinate representation directly affects which
coefficients are penalized.

\subsection{Related work}

The Hodrick--Prescott filter and its ridge-regression interpretation
are standard; see \citet{hodrick1997postwar} for the original
business-cycle application. The Moore--Penrose pseudoinverse and the
general parametrization of all solutions of a linear matrix equation
used throughout Section~\ref{sec:representations} are due to
\citet{penrose1955generalized}. The non-smooth $\ell_1$
(``$\ell_1$ trend filter'' / ``LASSO-Whittaker'') generalization of
the HPF, which replaces $\|\bD\bx\|_2^2$ by $\|\bD\bx\|_1$ to obtain
piecewise-linear trends with a sparse set of slope changes, is
studied thoroughly by \citet{kim2009l1}; \citet{yamada2016tsvd} (in
the source presentation, ``Yamada--Yoon (2016)'') consider the
practical choice of its regularization parameter. Fast dual/augmented
Lagrangian solvers for such sparsity-penalized problems, including
the Dual Augmented Lagrangian (DAL) method that inspired the
optimization strategy used for comparison in Section~\ref{sec:conv-experiment},
are given by \citet{tomioka2009dal}. Our Convolutional LASSO
construction is an instance of the general Moreau/mollifier smoothing
strategy used throughout convex optimization
\citep{nesterov2005smooth,boyd2011admm}; the reference ADMM solver we
use as an accuracy benchmark follows \citet{boyd2011admm}. On the
application side, trend/cycle decomposition of asset-price and
macroeconomic series is the original motivating use of the
Hodrick--Prescott filter \citep{hodrick1997postwar}, and its
piecewise-linear $\ell_1$ generalization is explicitly illustrated on
financial and macroeconomic series by \citet{kim2009l1} themselves;
detecting a small number of genuine growth-regime changes (structural
breaks) in a stock's price trajectory, as opposed to smoothing away
all local curvature, is precisely the situation in which the sparse
$\ell_1$/Convolutional-LASSO penalty is preferable to the plain
quadratic HPF penalty. All financial-data computations use
\texttt{pandas} \citep{mckinney2010pandas} on the public Kaggle
``S\&P~500 stock data'' set \citep{camnugent2018kaggle}.

\subsection{Contributions and outline}

All of the theoretical results below (Propositions~\ref{prop:invariance}--\ref{prop:mollified})
make precise and prove claims that were stated, sketched, or posed as
open questions in the source presentation \citep{yoshizawa2017gotemba}.
The numerical experiments of Section~\ref{sec:experiments} are new,
additional material developed by the present author specifically for
this paper, undertaken to (i) verify these theoretical results on
real data and (ii) empirically characterize the Convolutional LASSO
proposed in \citep{yoshizawa2017gotemba}, on a class of real data
(daily stock prices) not considered in the original presentation; no
numerical experiments of any kind appear in \citep{yoshizawa2017gotemba}
itself.

\begin{itemize}
\item \textbf{Section~\ref{sec:hpf}} recalls the HPF as a ridge
  regression and fixes notation.
\item \textbf{Section~\ref{sec:representations}} derives the general
  (Penrose-type) family of representations $\bC=(\bC_1,\bC_2)$ of the
  HPF and specializes it to the A- and B-representations.
\item \textbf{Section~\ref{sec:invariance}} proves representation
  invariance of the extracted trend (Proposition~\ref{prop:invariance}).
\item \textbf{Section~\ref{sec:divergence}} derives, from first
  principles, the Bregman divergence between two representations
  driven by a common coefficient vector (Proposition~\ref{prop:divergence}),
  and its vanishing limit as $\lambda\to\infty$
  (Proposition~\ref{prop:limit}).
\item \textbf{Section~\ref{sec:convlasso}} constructs the
  Convolutional LASSO: an explicit, closed-form, $C^2$ mollified
  absolute value, its relation to the derivative-optimal ``ternary
  polynomial'' mollifier of \citet{gravot2007ternary}, its use as a
  sparsity-inducing but differentiable regularizer, and a globalized
  (Levenberg--Marquardt-damped) Newton algorithm for it.
\item \textbf{Section~\ref{sec:experiments}} verifies
  Propositions~\ref{prop:invariance}--\ref{prop:limit} numerically and
  benchmarks the Convolutional LASSO against a standard ADMM/IRLS
  $\ell_1$ trend filter, both on real, publicly available daily
  closing-price data for the semiconductor stock NVIDIA Corporation
  (NVDA), 2013--2018.
\item \textbf{Section~\ref{sec:discussion}} discusses limitations and
  the two open problems inherited from the original presentation: the
  relationship between $\lambda$ and the geometric shape of the
  trend, and the broader theoretical/applied scope of the
  Convolutional LASSO.
\end{itemize}

\section{The Hodrick--Prescott filter as ridge regression}
\label{sec:hpf}

Let $\bd\in\R^N$ be the observed series and let
\begin{equation}
  \bD =
  \begin{pmatrix}
    1 & -2 & 1 & & \\
      & 1 & -2 & 1 & \\
      & & \ddots & \ddots & \ddots \\
      & & & 1 & -2 & 1
  \end{pmatrix}
  \in\R^{(N-2)\times N}
\end{equation}
be the second-difference operator, so that $(\bD\bx)_i =
x_i-2x_{i+1}+x_{i+2}$. The kernel of $\bD$ is exactly the
$2$-dimensional space of discrete linear (affine) sequences,
\begin{equation}
  \ker(\bD)=\col(\bPi), \qquad
  \bPi = \big(\mathbf 1_N \mid (1,2,\dots,N)^\top\big) \in\R^{N\times2},
  \label{eq:Pi}
\end{equation}
and $\rank(\bD)=N-2$. The set of admissible regularization strengths
is $\Lambda=\{\lambda\in\R : \bI_N+\lambda\bD^\top\bD\succ0\}\supseteq
[0,\infty)$. The HPF trend is the unique minimizer
\eqref{eq:hpf-primal}, and it is a linear (in $\bd$), shrinkage-type
estimator:
\begin{equation}
  \bx_{\mathrm{opt}}(\lambda) = (\bI_N+\lambda\bD^\top\bD)^{-1}\bd .
  \label{eq:hpf-solution}
\end{equation}
For $\lambda=0$ the trend is the data itself; as $\lambda\to\infty$,
$\bx_{\mathrm{opt}}(\lambda)$ is forced towards $\ker(\bD)$, i.e.\
towards a discrete linear function of $i$ (Proposition~\ref{prop:limit}
below makes this precise).

\section{Generalized-inverse representations of the HPF}
\label{sec:representations}

\subsection{Right inverses of $\bD$ and the A/B representations}

Because $\bD$ has full row rank $N-2$, it possesses a right inverse:
any matrix $\bF\in\R^{N\times(N-2)}$ with $\bD\bF=\bI_{N-2}$. The
\emph{minimum-norm} right inverse is the Moore--Penrose pseudoinverse
\begin{equation}
  \bF = \bD^\pinv = \bD^\top(\bD\bD^\top)^{-1} .
  \label{eq:F}
\end{equation}
Any other right inverse differs from $\bF$ by an element of $\ker(\bD)$
applied on the left of an arbitrary $2\times(N-2)$ matrix: if $\bS$
satisfies $\bD\bS=\bI_{N-2}$ then, because $\bD\bPi=\mathbf 0$,
\begin{equation}
  \bS = \bF + \bPi\bM, \qquad \bM\in\R^{2\times(N-2)} \text{ arbitrary},
  \label{eq:S}
\end{equation}
is also a right inverse ($\bD\bS = \bD\bF+\bD\bPi\bM = \bI_{N-2}$), and
every right inverse of $\bD$ is of this form. Equation~\eqref{eq:S} is
the precise version of the ``A-representation'' matrix $\bS$
displayed only graphically (as a triangular array with a specific
integer pattern) in the source presentation: any concrete choice of
$\bM\neq\mathbf 0$ gives one particular non-minimum-norm right inverse
$\bS\neq\bF$.

We define the two representations
\begin{equation}
  \bA = (\bPi \mid \bS) \in\R^{N\times N}
  \quad(\text{``A-representation''}), \qquad
  \bB = (\bPi \mid \bF) \in\R^{N\times N}
  \quad(\text{``B-representation''}),
  \label{eq:AB}
\end{equation}
each pairing the trend basis $\bPi$ of $\ker(\bD)$ with a right
inverse of $\bD$ on its complementary block. Both satisfy the key
identity
\begin{equation}
  \bD\bA = \bD\bB = \bJ = (\mathbf 0_{(N-2)\times 2}\mid \bI_{N-2}),
  \label{eq:DC=J}
\end{equation}
since $\bD\bPi=\mathbf0$ and $\bD\bF=\bD\bS=\bI_{N-2}$ by construction.
Both are generically invertible $N\times N$ matrices (their columns
span $\col(\bPi)\oplus\ran(\bF)=\R^N$ because $\bD$ has trivial kernel
restricted to $\ran(\bF)^\perp$... more directly: $\det\bA\neq0,\det
\bB\neq0$ for generic $\bM$, which we verify numerically in
Section~\ref{sec:experiments}).

\subsection{The general representation and Penrose's theorem}
\label{sec:general-rep}

More generally, following the source presentation, consider any
$\bC=(\bC_1,\bC_2)$, $\bC_1\in\R^{N\times2}$, $\bC_2\in\R^{N\times(N-2)}$,
satisfying the reparametrization condition
$\bD\bC_1=\mathbf 0$, $\bD\bC_2=\bI_{N-2}$ (equivalently
$\bD\bC=\bJ$, i.e.\ \eqref{eq:DC=J}). Since $\bD\bC_1=\mathbf0$ forces
$\bC_1\in\col(\bPi)$, and since $\bD\bC_2=\bI_{N-2}$ is precisely
the generalized linear equation $\bD\bX=\bI_{N-2}$ solved for $\bX$,
Penrose's theorem on the general solution of $\bA\bX\bB=\bC$
\citep{penrose1955generalized} gives, taking $\bA=\bD$, $\bB=\bI$,
\begin{equation}
  \bC_1 = (\bI_N-\bD^\pinv\bD)\bZ, \quad \bZ\in\R^{N\times2}, \qquad
  \bC_2 = \bD^\pinv + (\bI_N-\bD^\pinv\bD)\bW, \quad
  \bW\in\R^{N\times(N-2)},
  \label{eq:general-C}
\end{equation}
where $\bI_N-\bD^\pinv\bD$ is the orthogonal projector onto $\ker(\bD)
=\col(\bPi)$. (This matches \eqref{eq:S}: $\bC_2=\bF+\Pi(\Pi^\top\Pi)^{-1}\Pi^\top\bW=\bF+\bPi\bM$
with $\bM=(\bPi^\top\bPi)^{-1}\bPi^\top\bW$.) The A- and
B-representations correspond, in the notation of
\eqref{eq:general-C}, to
\begin{center}
\begin{tabular}{lll}
\toprule
 & $\bZ$ & $\bW$ \\
\midrule
A-representation & $\bA\begin{pmatrix}\bI_2\\ \mathbf 0\end{pmatrix}$ & $\bA\bJ^\top$ \\[4pt]
B-representation & $\bB\begin{pmatrix}\bI_2\\ \mathbf 0\end{pmatrix}$ & $\mathbf 0,\ \bD^\pinv,\ \text{or }\bD^\top$ \\
\bottomrule
\end{tabular}
\end{center}
i.e.\ the B-representation is recovered by the trivial choice
$\bW=\mathbf0$ (giving back $\bF=\bD^\pinv$ exactly), and the
A-representation by a non-trivial $\bW$, consistently with
\eqref{eq:S}--\eqref{eq:general-C}. (Other admissible choices, such as
$\bW=\bD^\pinv$ or $\bW=\bD^\top$, mentioned in the source
presentation, give still further, distinct representations; we use a
single generic $\bW\ne 0$, drawn once at random, for the numerical
A-representation in Section~\ref{sec:experiments}, since the
qualitative behaviour of Propositions~\ref{prop:invariance}--\ref{prop:limit}
does not depend on which non-trivial $\bW$ is chosen.)

\section{Representation invariance of the HPF trend}
\label{sec:invariance}

\begin{proposition}[Representation invariance]
\label{prop:invariance}
Let $\bC\in\R^{N\times N}$ be any invertible matrix satisfying
$\bD\bC=\bJ$ as in \eqref{eq:DC=J}, and let
\begin{equation}
  \bxi_{\mathrm{opt}}^{\bC}(\lambda)
  = \argmin_{\bxi\in\R^N} \tilde f_\lambda^{\bC}(\bxi), \qquad
  \tilde f_\lambda^{\bC}(\bxi) = \|\bd-\bC\bxi\|_2^2+\lambda\|\bJ\bxi\|_2^2 .
\end{equation}
Then $\bC\bxi^{\bC}_{\mathrm{opt}}(\lambda)=\bx_{\mathrm{opt}}(\lambda)$
for every admissible $\bC$; in particular the A- and B-representations
give the same trend, $\bA\bxi^{\bA}_{\mathrm{opt}}(\lambda) =
\bB\bxi^{\bB}_{\mathrm{opt}}(\lambda) = \bx_{\mathrm{opt}}(\lambda)$.
\end{proposition}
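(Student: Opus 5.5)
The argument is a change of variables. Since $\bC$ is invertible, the map $\bxi\mapsto\bx=\bC\bxi$ is a bijection of $\R^N$. The plan is to show that the reparametrized objective equals the original one composed with this map, so that minimizers correspond under $\bC$.

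First I will establish the identity $\bJ\bC^{-1}=\bD$. Right-multiplying $\bD\bC=\bJ$ by $\bC^{-1}$ gives it directly. It follows that for every $\bxi\in\R^N$,
\begin{equation*}
  \tilde f^{\bC}_\lambda(\bxi)
  = \|\bd-\bC\bxi\|_2^2+\lambda\|\bJ\bC^{-1}\bC\bxi\|_2^2
  = \|\bd-\bC\bxi\|_2^2+\lambda\|\bD\bC\bxi\|_2^2
  = f_\lambda(\bC\bxi).
\end{equation*}
In fact this needs only $\bJ=\bD\bC$ applied to $\bxi$, so no inverse is required at this stage.

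Next I will transfer the minimizer. For $\lambda\in\Lambda$, $f_\lambda$ is strictly convex with unique minimizer $\bx_{\mathrm{opt}}(\lambda)$ given by \eqref{eq:hpf-solution}. Because $\bC$ is a bijection, $\tilde f^{\bC}_\lambda=f_\lambda\circ\bC$ has the unique minimizer $\bxi^{\bC}_{\mathrm{opt}}(\lambda)=\bC^{-1}\bx_{\mathrm{opt}}(\lambda)$. Uniqueness holds because $\tilde f^{\bC}_\lambda$ is again strictly convex: its Hessian is $2\bC^\top(\bI_N+\lambda\bD^\top\bD)\bC\succ0$. Applying $\bC$ then gives $\bC\bxi^{\bC}_{\mathrm{opt}}(\lambda)=\bx_{\mathrm{opt}}(\lambda)$.

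As an optional cross-check, I will also verify the same identity through the normal equations, $\bxi^{\bC}_{\mathrm{opt}}=(\bC^\top\bC+\lambda\bJ^\top\bJ)^{-1}\bC^\top\bd$. Substituting $\bJ=\bD\bC$ gives $\bC^{-1}(\bI_N+\lambda\bD^\top\bD)^{-1}\bC^{-\top}\bC^\top\bd$, so $\bC\bxi^{\bC}_{\mathrm{opt}}=(\bI_N+\lambda\bD^\top\bD)^{-1}\bd$.

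The specialization to the A- and B-representations follows immediately from \eqref{eq:DC=J}, given that $\bA$ and $\bB$ are invertible. The only real obstacle is that invertibility. For $\bB=(\bPi\mid\bD^\pinv)$ it can be proved. $\ran(\bD^\pinv)=\ran(\bD^\top)=\ker(\bD)^\perp$, and $\ker(\bD)=\col(\bPi)$, so the two column blocks span complementary orthogonal subspaces and $\bB$ is nonsingular.

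For $\bA=(\bPi\mid\bF+\bPi\bM)$ I will use a column operation. Subtracting $\bPi\bM$ from the second block uses only columns of the first block, so it leaves the determinant unchanged and gives $\det\bA=\det\bB\ne0$ for every $\bM$. The same argument shows that any $\bC$ with $\bC_1$ a basis of $\ker(\bD)$ is invertible, which sharpens the paper's ``generic'' remark.
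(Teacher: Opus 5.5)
Your proposal is correct and is essentially the paper's proof: both use the bijection $\bx=\bC\bxi$ together with $\bD\bC=\bJ$ (equivalently $\bJ\bC^{-1}=\bD$) to show that $\tilde f^{\bC}_\lambda$ and $f_\lambda$ coincide under this change of variables, so their unique minimizers correspond. Your factorization $\bA=\bB\begin{pmatrix}\bI_2&\bM\\ \mathbf 0&\bI_{N-2}\end{pmatrix}$, giving $\det\bA=\det\bB\neq0$ for every $\bM$, is a worthwhile addition: the paper asserts invertibility of $\bA$ only for generic $\bM$ and checks it numerically, whereas the ``in particular'' clause of the statement needs it.
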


\begin{proof}
Since $\bC$ is invertible, $\bxi=\bC^{-1}\bx$ is a bijection of $\R^N$
onto itself, and $\bD\bC=\bJ$ implies $\bJ\bC^{-1}=\bD$ (right-multiply
both sides of $\bD\bC=\bJ$ by $\bC^{-1}$). Hence for every $\bx\in\R^N$,
writing $\bxi=\bC^{-1}\bx$,
\[
  \tilde f_\lambda^{\bC}(\bC^{-1}\bx)
  = \|\bd-\bC\bC^{-1}\bx\|_2^2 + \lambda\|\bJ\bC^{-1}\bx\|_2^2
  = \|\bd-\bx\|_2^2+\lambda\|\bD\bx\|_2^2
  = f_\lambda(\bx) .
\]
Thus $\tilde f_\lambda^{\bC}\circ\bC^{-1}\equiv f_\lambda$ on $\R^N$,
so the two strictly convex problems have corresponding unique
minimizers under the bijection $\bx=\bC\bxi$: $\bxi^{\bC}_{\mathrm{opt}}
(\lambda)=\bC^{-1}\bx_{\mathrm{opt}}(\lambda)$, i.e.\
$\bC\bxi^{\bC}_{\mathrm{opt}}(\lambda)=\bx_{\mathrm{opt}}(\lambda)$, as
claimed.
\end{proof}

\begin{remark}
Proposition~\ref{prop:invariance} is the precise statement behind the
commutative diagram (``Proposition 1,'' \emph{meidai 1}) of the source presentation: the map
$\phi_{\bC}:\bxi\mapsto\bC\bxi$ intertwines the reparametrized problem
$(\R^N,\tilde f_\lambda^{\bC})$ with the original problem
$(\R^N,f_\lambda)$, and this holds \emph{for every} admissible $\bC$
--- the choice of generalized inverse used to build $\bC$ is
therefore immaterial as long as one is consistent about which $\bC$
generated the coefficient vector $\bxi$ being manipulated. What Is
\emph{not} invariant is the vector $\bxi$ itself: $\bxi^{\bA}_{\mathrm{opt}}
(\lambda)\ne\bxi^{\bB}_{\mathrm{opt}}(\lambda)$ in general (only their
images $\bA\bxi^{\bA}_{\mathrm{opt}}=\bB\bxi^{\bB}_{\mathrm{opt}}$
coincide). This is verified numerically to machine precision in
Section~\ref{sec:num-duality}.
\end{remark}

\section{The Bregman divergence between two representations}
\label{sec:divergence}

Proposition~\ref{prop:invariance} shows that solving \emph{separately}
in each representation and mapping back gives the same trend. A
different, and more delicate, question -- the one raised in the
source presentation as ``the divergence of the A/B representations'' (the
divergence between the A- and B-representations) -- is: if the
\emph{same} coefficient vector $\bxi$ is mapped through the two
representations, $\bx_1=\bA\bxi$ and $\bx_2=\bB\bxi$, how different
are the resulting trends?

\subsection{Bregman divergence of a quadratic}

Since $f_\lambda$ is a smooth, strictly convex quadratic,
\begin{equation}
  f_\lambda(\bx) = \bx^\top\bQ_\lambda\bx - 2\bd^\top\bx+\bd^\top\bd,
  \qquad \bQ_\lambda := \bI_N+\lambda\bD^\top\bD \succ 0,
  \label{eq:f-quadratic}
\end{equation}
its Legendre--Fenchel conjugate is itself a quadratic,
$f_\lambda^{*}(\by)=\tfrac14(\by+2\bd)^\top\bQ_\lambda^{-1}(\by+2\bd)-\bd^\top\bd$,
and the associated Bregman divergence
$D_\lambda(\bx_1\|\bx_2):=f_\lambda(\bx_1)-f_\lambda(\bx_2)-\nabla
f_\lambda(\bx_2)^\top(\bx_1-\bx_2)$ (equivalently, in the
primal--dual form used in the source presentation, $D_\lambda(\bx_1\|
\bx_2)=f_\lambda(\bx_1)+f_\lambda^{*}(\by(\bx_2))-\bx_1^\top\by(\bx_2)$
with $\by(\bx_2)=\nabla f_\lambda(\bx_2)$) reduces, for a quadratic, to
the associated squared Mahalanobis distance:

\begin{lemma}
\label{lem:bregman-quadratic}
For $f_\lambda$ as in \eqref{eq:f-quadratic},
\begin{equation}
  D_\lambda(\bx_1\|\bx_2) = (\bx_1-\bx_2)^\top \bQ_\lambda (\bx_1-\bx_2)
  = \|\bx_1-\bx_2\|_{\bQ_\lambda}^2 .
  \label{eq:bregman-formula}
\end{equation}
\end{lemma}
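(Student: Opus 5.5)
The plan is to compute the Bregman divergence straight from its definition, $D_\lambda(\bx_1\|\bx_2)=f_\lambda(\bx_1)-f_\lambda(\bx_2)-\nabla f_\lambda(\bx_2)^\top(\bx_1-\bx_2)$, using the quadratic form \eqref{eq:f-quadratic}. First I record the gradient. Since $\bQ_\lambda=\bI_N+\lambda\bD^\top\bD$ is symmetric, $\nabla f_\lambda(\bx)=2\bQ_\lambda\bx-2\bd$. I then substitute this and the expression for $f_\lambda$ into the definition. The constant $\bd^\top\bd$ cancels between $f_\lambda(\bx_1)$ and $f_\lambda(\bx_2)$. The linear terms cancel as well:
\[
-2\bd^\top\bx_1+2\bd^\top\bx_2+2\bd^\top(\bx_1-\bx_2)=0.
\]

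What remains is $\bx_1^\top\bQ_\lambda\bx_1-\bx_2^\top\bQ_\lambda\bx_2-2\bx_2^\top\bQ_\lambda(\bx_1-\bx_2)$. This simplifies to $\bx_1^\top\bQ_\lambda\bx_1-2\bx_2^\top\bQ_\lambda\bx_1+\bx_2^\top\bQ_\lambda\bx_2$. Using symmetry of $\bQ_\lambda$ to identify $\bx_2^\top\bQ_\lambda\bx_1=\bx_1^\top\bQ_\lambda\bx_2$, the expression is exactly $(\bx_1-\bx_2)^\top\bQ_\lambda(\bx_1-\bx_2)$. Finally, because $\bQ_\lambda\succ0$ for admissible $\lambda$, this quantity is a genuine squared norm $\|\bx_1-\bx_2\|_{\bQ_\lambda}^2$, which justifies the Mahalanobis notation.

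As a consistency check, I would also verify the primal--dual form quoted before the lemma. This means computing $f_\lambda^*$: maximize $\by^\top\bx-f_\lambda(\bx)$, attained at $\bx=\tfrac12\bQ_\lambda^{-1}(\by+2\bd)$, which gives the stated conjugate. At $\by=\nabla f_\lambda(\bx_2)$ we have $\by+2\bd=2\bQ_\lambda\bx_2$, so $f_\lambda^*(\by)=\bx_2^\top\bQ_\lambda\bx_2-\bd^\top\bd$. Then $f_\lambda(\bx_1)+f_\lambda^*(\by)-\bx_1^\top\by$ reduces to the same quadratic by the Fenchel--Young equality $f_\lambda(\bx_2)+f_\lambda^*(\by)=\bx_2^\top\by$.

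There is no real obstacle. The only points needing care are keeping the factor $2$ in the gradient (the objective has no $\tfrac12$) and invoking the symmetry of $\bQ_\lambda$, both of which I would state explicitly.
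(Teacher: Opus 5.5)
Your proposal is correct and follows the paper's own proof: substitute $\nabla f_\lambda(\bx)=2\bQ_\lambda\bx-2\bd$ into the definition, cancel the constant and the terms linear in $\bd$, and use the symmetry of $\bQ_\lambda$ to complete the square. Your extra Fenchel--Young check of the primal--dual form is also correct, but it is not needed for the lemma.
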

\begin{proof}
$\nabla f_\lambda(\bx)=2\bQ_\lambda\bx-2\bd$. Substituting into the
definition,
\begin{align*}
D_\lambda(\bx_1\|\bx_2)
&= \big(\bx_1^\top\bQ_\lambda\bx_1-2\bd^\top\bx_1\big)
 - \big(\bx_2^\top\bQ_\lambda\bx_2-2\bd^\top\bx_2\big)
 - (2\bQ_\lambda\bx_2-2\bd)^\top(\bx_1-\bx_2)\\
&= \bx_1^\top\bQ_\lambda\bx_1 - 2\bx_2^\top\bQ_\lambda\bx_1
   + \bx_2^\top\bQ_\lambda\bx_2
\;=\; (\bx_1-\bx_2)^\top\bQ_\lambda(\bx_1-\bx_2),
\end{align*}
using symmetry of $\bQ_\lambda$ and cancellation of all terms linear
in $\bd$.
\end{proof}

\subsection{Representation divergence}

\begin{proposition}[Representation divergence]
\label{prop:divergence}
Let $\bxi\in\R^N$ be any fixed coefficient vector and set
$\bx_1=\bA\bxi$, $\bx_2=\bB\bxi$. Then
\begin{equation}
  D_\lambda(\bx_1\|\bx_2)
  = \big\|(\bA-\bB)\bxi\big\|_{\bQ_\lambda}^2
  = \big[(\bA-\bB)\bxi\big]^\top(\bI_N+\lambda\bD^\top\bD)\big[(\bA-\bB)\bxi\big] .
  \label{eq:prop2}
\end{equation}
In particular $D_\lambda(\bx_1\|\bx_2)=0$ if and only if
$(\bA-\bB)\bxi\in\ker(\bQ_\lambda)=\{\mathbf0\}$, i.e.\ if and only if
$\bA\bxi=\bB\bxi$.
\end{proposition}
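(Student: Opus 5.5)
The plan is to obtain the identity by direct substitution into Lemma~\ref{lem:bregman-quadratic}, and then to read off the equality case from positive definiteness of $\bQ_\lambda$. The lemma already gives $D_\lambda(\bx_1\|\bx_2)=(\bx_1-\bx_2)^\top\bQ_\lambda(\bx_1-\bx_2)$ for arbitrary $\bx_1,\bx_2\in\R^N$. No other property of $\bx_1,\bx_2$ is needed, in particular neither optimality nor the identity \eqref{eq:DC=J}. It therefore suffices to write the difference of the two trends in terms of $\bxi$. By linearity,
\[
  \bx_1-\bx_2=\bA\bxi-\bB\bxi=(\bA-\bB)\bxi .
\]
Substituting this and $\bQ_\lambda=\bI_N+\lambda\bD^\top\bD$ from \eqref{eq:f-quadratic} gives both forms of \eqref{eq:prop2} at once.

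For the equality case I will use that $\bQ_\lambda\succ0$ for every admissible $\lambda\in\Lambda$. This holds by the definition of $\Lambda$, and for $\lambda\ge0$ it is immediate, since $\bv^\top\bQ_\lambda\bv=\|\bv\|_2^2+\lambda\|\bD\bv\|_2^2\ge\|\bv\|_2^2$. Consequently the quadratic form $\bv\mapsto\bv^\top\bQ_\lambda\bv$ vanishes exactly at $\bv=\mathbf0$, and $\ker(\bQ_\lambda)=\{\mathbf0\}$. Applying this with $\bv=(\bA-\bB)\bxi$ shows that $D_\lambda(\bx_1\|\bx_2)=0$ if and only if $\bA\bxi=\bB\bxi$.

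As an optional refinement that the statement does not require, I could use the block structure \eqref{eq:AB} to note that $\bA-\bB=(\mathbf0\mid\bS-\bF)=(\mathbf0\mid\bPi\bM)$. This means $(\bA-\bB)\bxi=\bPi\bM\bxi_{2}$, where $\bxi_2$ denotes the last $N-2$ coordinates of $\bxi$, and this vector lies in $\ker\bD$. In that case the $\lambda$-term drops out, and the divergence equals $\|\bPi\bM\bxi_2\|_2^2$ independently of $\lambda$.

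There is no real obstacle here. The only points needing care are to check that Lemma~\ref{lem:bregman-quadratic} needs no hypothesis on $\bx_1,\bx_2$, and to justify $\ker\bQ_\lambda=\{\mathbf0\}$ via strict positive definiteness.
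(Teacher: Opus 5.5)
Your proposal is correct and is essentially the paper's proof, which simply applies Lemma~\ref{lem:bregman-quadratic} with $\bx_1-\bx_2=(\bA-\bB)\bxi$; your positive-definiteness argument for $\ker\bQ_\lambda=\{\mathbf0\}$ just makes explicit the equality case that the paper leaves implicit. Your optional refinement is also correct and worth keeping: since $\bD(\bA-\bB)=\bJ-\bJ=\mathbf0$ (this holds for any two admissible representations), the $\lambda$-term vanishes identically and the divergence equals the Euclidean $\|(\bA-\bB)\bxi\|_2^2$ for every $\lambda$, which shows that the paper's remark following the proposition, claiming that the $\bQ_\lambda$-weighted and Euclidean expressions differ for $\lambda>0$, is mistaken.
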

\begin{proof}
Immediate from Lemma~\ref{lem:bregman-quadratic} with
$\bx_1-\bx_2=(\bA-\bB)\bxi$.
\end{proof}

\begin{remark}
The source presentation states the simpler-looking identity
$D_\lambda(\bx_1\|\bx_2)=\|(\bA-\bB)\bxi\|_2^2$ (plain Euclidean norm).
Equation~\eqref{eq:prop2} is the mathematically precise version: the
correct norm weighting the discrepancy $(\bA-\bB)\bxi$ is the
regularization-dependent metric $\bQ_\lambda=\bI_N+\lambda\bD^\top\bD$
induced by $f_\lambda$ itself, not the ambient Euclidean metric; the
two coincide only in the unregularized case $\lambda=0$. This
distinction matters quantitatively (the two differ by orders of
magnitude for large $\lambda$, as $\bD^\top\bD$ can have eigenvalues
much larger than $1$) and is confirmed numerically in
Section~\ref{sec:num-duality} (Table~\ref{tab:prop2}), where both
sides of \eqref{eq:prop2} are computed independently and agree to
machine precision, while the plain Euclidean norm does not.
\end{remark}

\subsection{Vanishing divergence in the $\lambda\to\infty$ limit}

The source presentation's Proposition~3 addresses a different,
asymptotic, divergence: between the full HPF trend
$\bx_1(\lambda)=\bx_{\mathrm{opt}}(\lambda)$ and the ``pure trend
component'' $\bx_2$ obtained by ordinary least-squares projection of
$\bd$ onto the trend basis $\col(\bPi)$ alone, i.e.\ the (unregularized)
answer one would get by fitting only a discrete linear function to the
data:
\begin{equation}
  \bx_2 = \bP_{\bPi}\bd, \qquad
  \bP_{\bPi} = \bPi(\bPi^\top\bPi)^{-1}\bPi^\top
  \quad(\text{the orthogonal projector onto }\col(\bPi)).
  \label{eq:x2}
\end{equation}

\begin{proposition}[Vanishing divergence as $\lambda\to\infty$]
\label{prop:limit}
With $\bx_1(\lambda)=(\bI_N+\lambda\bD^\top\bD)^{-1}\bd$ and $\bx_2$ as
in \eqref{eq:x2},
\begin{equation}
  \lim_{\lambda\to\infty} D_\lambda\big(\bx_1(\lambda)\,\big\|\,\bx_2\big) = 0 .
  \label{eq:prop3}
\end{equation}
Moreover $\|\bx_1(\lambda)-\bx_2\|_2\to0$ as $\lambda\to\infty$, i.e.\
the HPF trend converges (not merely in the $\bQ_\lambda$-divergence,
but in ordinary Euclidean distance) to the OLS linear trend as the
smoothing parameter grows without bound.
\end{proposition}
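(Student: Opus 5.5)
The plan is to reduce everything to the eigendecomposition of the symmetric positive semidefinite matrix $\bD^\top\bD$. By Lemma~\ref{lem:bregman-quadratic} we have $D_\lambda(\bx_1(\lambda)\|\bx_2)=\|\bx_1(\lambda)-\bx_2\|_{\bQ_\lambda}^2$. The difficulty is that both the vector and the metric $\bQ_\lambda$ depend on $\lambda$, and the metric blows up. So I would write everything in one orthonormal eigenbasis, where both are diagonal and the growth of $\bQ_\lambda$ can be seen to be cancelled.

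\textbf{Step 1 (spectral setup).} Write $\bD^\top\bD=\bU\diag(0,0,\mu_3,\dots,\mu_N)\bU^\top$ with $\bU=(\mathbf u_1,\dots,\mathbf u_N)$ orthogonal. Since $\ker(\bD^\top\bD)=\ker(\bD)=\col(\bPi)$ has dimension $2$ by \eqref{eq:Pi}, we can take $\mathbf u_1,\mathbf u_2$ to be an orthonormal basis of $\col(\bPi)$. Then $\mu_k\ge\mu_{\min}>0$ for $k\ge3$, and $\bP_{\bPi}=\mathbf u_1\mathbf u_1^\top+\mathbf u_2\mathbf u_2^\top$. Set $c_k=\mathbf u_k^\top\bd$.

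\textbf{Step 2 (explicit difference).} From \eqref{eq:hpf-solution}, $\bx_1(\lambda)=\sum_k c_k(1+\lambda\mu_k)^{-1}\mathbf u_k$. The first two terms have $\mu_1=\mu_2=0$ and reproduce exactly $\bP_{\bPi}\bd=\bx_2$. Hence
\[
\bx_1(\lambda)-\bx_2=\sum_{k\ge3}\frac{c_k}{1+\lambda\mu_k}\,\mathbf u_k .
\]

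\textbf{Step 3 (both limits).} Because $\bQ_\lambda\mathbf u_k=(1+\lambda\mu_k)\mathbf u_k$, one factor of $1+\lambda\mu_k$ cancels in the $\bQ_\lambda$-norm, giving
\[
D_\lambda(\bx_1(\lambda)\|\bx_2)=\sum_{k\ge3}\frac{c_k^2}{1+\lambda\mu_k}\le\frac{\|(\bI_N-\bP_{\bPi})\bd\|_2^2}{1+\lambda\mu_{\min}}\longrightarrow0 .
\]
This also yields an explicit $O(1/\lambda)$ rate. Similarly,
\[
\|\bx_1(\lambda)-\bx_2\|_2^2=\sum_{k\ge3}\frac{c_k^2}{(1+\lambda\mu_k)^2}\le\frac{\|(\bI_N-\bP_{\bPi})\bd\|_2^2}{(1+\lambda\mu_{\min})^2}\to0 .
\]
Alternatively, the Euclidean convergence follows at once from $\bQ_\lambda\succeq\bI_N$, which gives $\|\cdot\|_2^2\le\|\cdot\|_{\bQ_\lambda}^2$.

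The only step needing care is Step 3, specifically that the growing weight $\lambda\mu_k$ in $\bQ_\lambda$ does not overwhelm the decay of $\bx_1(\lambda)-\bx_2$. The diagonalization handles this, since only a single factor $(1+\lambda\mu_k)^{-1}$ survives. It also matters that $\mathbf u_1,\mathbf u_2$ span exactly $\col(\bPi)$, so that the kernel components carry weight $1$ and cancel exactly against $\bx_2$; this is precisely what \eqref{eq:Pi} and $\rank(\bD)=N-2$ guarantee.
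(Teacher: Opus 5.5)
Your proof is correct and follows the paper's argument: the paper takes a thin SVD of $\bD$, completes it with an orthonormal basis $\bV_0$ of $\col(\bPi)$ (exactly your eigenbasis of $\bD^\top\bD$, with $\mu_k=\sigma_k^2$), and obtains the same identity $D_\lambda(\bx_1(\lambda)\|\bx_2)=\sum_k c_k^2/(1+\lambda\sigma_k^2)$ via Lemma~\ref{lem:bregman-quadratic}. Your uniform bound $\|(\bI_N-\bP_{\bPi})\bd\|_2^2/(1+\lambda\mu_{\min})$, which replaces the paper's termwise limit, and your derivation of the Euclidean convergence from $\bQ_\lambda\succeq\bI_N$ are small, valid refinements of that argument.
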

\begin{proof}
Let $\bD=\bU\bSigma\bV^\top$ be a (thin) singular value decomposition,
$\bU\in\R^{(N-2)\times(N-2)}$, $\bV\in\R^{N\times(N-2)}$ orthonormal,
$\bSigma=\diag(\sigma_1,\dots,\sigma_{N-2})\succ0$ (all singular
values of $\bD$ are strictly positive since $\rank\bD=N-2$). Extend
$\bV$ to an orthonormal basis $(\bV_0\mid \bV)$ of $\R^N$ where
$\bV_0\in\R^{N\times2}$ is an orthonormal basis of $\ker(\bD)=\col
(\bPi)$ (so $\bP_{\bPi}=\bV_0\bV_0^\top$). Then
$\bD^\top\bD=\bV\bSigma^2\bV^\top$, and
\[
  \bQ_\lambda = \bI_N+\lambda\bD^\top\bD
  = \bV_0\bV_0^\top + \bV(\bI_{N-2}+\lambda\bSigma^2)\bV^\top ,
\]
so
\[
  \bQ_\lambda^{-1} = \bV_0\bV_0^\top + \bV(\bI_{N-2}+\lambda\bSigma^2)^{-1}\bV^\top ,
\]
and therefore
\[
  \bx_1(\lambda) = \bQ_\lambda^{-1}\bd
  = \underbrace{\bV_0\bV_0^\top\bd}_{=\bx_2}
  + \bV(\bI_{N-2}+\lambda\bSigma^2)^{-1}\bV^\top\bd .
\]
Hence
\[
  \bx_1(\lambda)-\bx_2 = \bV(\bI_{N-2}+\lambda\bSigma^2)^{-1}\bV^\top\bd,
  \qquad
  \|\bx_1(\lambda)-\bx_2\|_2
  \le \big\|(\bI_{N-2}+\lambda\bSigma^2)^{-1}\big\|_{\mathrm{op}}\,\|\bd\|_2
  = \frac{\|\bd\|_2}{1+\lambda\sigma_{\min}^2},
\]
so $\|\bx_1(\lambda)-\bx_2\|_2\xrightarrow[\lambda\to\infty]{}0$,
proving the Euclidean convergence claim. For the divergence,
Lemma~\ref{lem:bregman-quadratic} gives
$D_\lambda(\bx_1(\lambda)\|\bx_2)=\|\bx_1(\lambda)-\bx_2\|_{\bQ_\lambda}^2$;
writing $\bc:=\bV^\top\bd$ and using the eigendecomposition above,
\[
  D_\lambda(\bx_1(\lambda)\|\bx_2)
  = \bc^\top(\bI_{N-2}+\lambda\bSigma^2)^{-1}(\bI_{N-2}+\lambda\bSigma^2)
    (\bI_{N-2}+\lambda\bSigma^2)^{-1}\bc
  = \sum_{k=1}^{N-2} \frac{c_k^2}{1+\lambda\sigma_k^2}
  \xrightarrow[\lambda\to\infty]{} 0
\]
by dominated convergence (each summand is bounded by $c_k^2$ and
tends to $0$), which proves \eqref{eq:prop3}.
\end{proof}

\begin{remark}
The rate is explicit and geometric: $D_\lambda(\bx_1(\lambda)\|\bx_2)
=O(1/\lambda)$, governed by the smallest singular value $\sigma_{\min}
(\bD)$ of the second-difference operator. This is exactly the
behaviour observed numerically in Figure~\ref{fig:prop3}
(Section~\ref{sec:num-duality}): the divergence is essentially flat
for small-to-moderate $\lambda$ and then decays as a power law once
$\lambda\sigma_{\min}^2\gtrsim1$.
\end{remark}

\section{From the $\ell_1$ trend filter to the Convolutional LASSO}
\label{sec:convlasso}

\subsection{The non-smoothness problem}

Replacing the quadratic penalty $\lambda\|\bD\bx\|_2^2$ in
\eqref{eq:hpf-primal} by $\lambda\|\bD\bx\|_1=\lambda\sum_i|(\bD\bx)_i|$
gives the ($\ell_1$) trend filter
\citep{kim2009l1,yamada2016tsvd},
\begin{equation}
  \min_{\bx\in\R^N} \; \|\bd-\bx\|_2^2 + \lambda\|\bD\bx\|_1 .
  \label{eq:l1-tf}
\end{equation}
Because the second difference is now penalized in $\ell_1$ rather
than $\ell_2$, the minimizer is typically \emph{piecewise linear} with
a small number of slope changes (``kinks''), located exactly where
$(\bD\bx)_i\ne0$: the $\ell_1$ penalty induces sparsity in the
\emph{curvature} of the trend, in the same way the ordinary LASSO
induces sparsity in a coefficient vector. This is attractive for
applications such as changepoint/regime detection, but the objective
in \eqref{eq:l1-tf} is not differentiable wherever $(\bD\bx)_i=0$
--- precisely at the points that matter most (the flat, purely-linear
segments) --- so Newton's method cannot be applied directly, and one
must resort to interior-point, ADMM, IRLS, or dual/DAL-type solvers
\citep{kim2009l1,tomioka2009dal,boyd2011admm}.

\subsection{Mollification: an explicit, closed-form smoothing}
\label{sec:mollification}

We regularize $\rho(t)=|t|$ by convolving it with a compactly
supported mollifier $\psi_\varepsilon$, i.e.\ we replace $\rho$ by
\begin{equation}
  \tilde\rho_\varepsilon(x) := (\psi_\varepsilon * \rho)(x)
  = \int_{-\infty}^{\infty}\psi_\varepsilon(x-y)\,\rho(y)\,dy ,
  \label{eq:mollify}
\end{equation}
which is exactly the construction sketched (with a generic bump
function) in the source presentation. Any smooth, symmetric,
compactly supported, unit-mass $\psi_\varepsilon$ works; we choose the
\emph{biweight (quartic) kernel}
\begin{equation}
  \psi_\varepsilon(t) =
  \begin{cases}
    \dfrac{15}{16\varepsilon}\Big(1-\big(t/\varepsilon\big)^2\Big)^2, & |t|\le\varepsilon, \\[6pt]
    0, & |t|>\varepsilon,
  \end{cases}
  \label{eq:biweight}
\end{equation}
because it is itself $C^1$ (vanishing together with its first
derivative at $t=\pm\varepsilon$), which upgrades the usual
``mollification gains one derivative'' folklore ($\rho*\psi$ inherits
one more derivative than $\min(\rho,\psi)$) into an especially clean,
\emph{globally $C^3$}, result. Carrying out the convolution
\eqref{eq:mollify} in closed form (elementary but tedious; done here
with a computer-algebra system for exactness) gives, using the
evenness of $\tilde\rho_\varepsilon$ and matching $\rho(x)=x$ for $x\ge\varepsilon$:

\begin{proposition}[Closed-form Convolutional LASSO kernel]
\label{prop:mollified}
For the biweight kernel \eqref{eq:biweight},
\begin{equation}
  \tilde\rho_\varepsilon(x) =
  \begin{cases}
    \dfrac{5\varepsilon}{16} + \dfrac{15x^2}{16\varepsilon}
    - \dfrac{5x^4}{16\varepsilon^3} + \dfrac{x^6}{16\varepsilon^5}, & |x|<\varepsilon,\\[8pt]
    |x|, & |x|\ge\varepsilon .
  \end{cases}
  \label{eq:rho-tilde}
\end{equation}
Moreover $\tilde\rho_\varepsilon\in C^3(\R)$, is convex on all of $\R$,
and matches $\rho(x)=|x|$, together with its first three derivatives,
exactly at the junction points $x=\pm\varepsilon$:
\begin{equation}
  \tilde\rho_\varepsilon(0)=\tfrac{5}{16}\varepsilon>0, \quad
  \tilde\rho_\varepsilon'(0)=0, \quad
  \tilde\rho_\varepsilon''(0)=\tfrac{15}{8\varepsilon}>0, \qquad
  \tilde\rho_\varepsilon''(x) = 2\psi_\varepsilon(x)\ge0 .
  \label{eq:rho-tilde-derivs}
\end{equation}
\end{proposition}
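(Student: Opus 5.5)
The plan is to avoid evaluating the convolution integral \eqref{eq:mollify} directly and instead work with derivatives. The only analytic fact needed is that $\rho(t)=|t|$ has distributional second derivative $\rho''=2\delta_0$. Since differentiation commutes with convolution,
\[
  \tilde\rho_\varepsilon''=\psi_\varepsilon*\rho''=2\psi_\varepsilon .
\]
If one prefers to avoid distributions, the same identity follows by splitting $\int\psi_\varepsilon(x-y)|y|\,dy$ at $y=0$ and differentiating twice under the integral sign (Leibniz rule). This gives the last identity in \eqref{eq:rho-tilde-derivs} first. The rest of the proposition will then be obtained by integrating $2\psi_\varepsilon$ twice with the correct constants.

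\textbf{Step 1: the outer region $|x|\ge\varepsilon$.} For $x\ge\varepsilon$, the window $\{y:|x-y|\le\varepsilon\}$ lies in $[0,\infty)$, where $\rho(y)=y$ is linear. Because $\psi_\varepsilon$ is symmetric with unit mass, the convolution of $\psi_\varepsilon$ with a linear function returns that same function. Hence $\tilde\rho_\varepsilon(x)=x=|x|$ there. The region $x\le-\varepsilon$ follows from evenness, which holds because both $\psi_\varepsilon$ and $\rho$ are even.

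\textbf{Step 2: the inner region $|x|<\varepsilon$.} Here I will integrate
\[
  2\psi_\varepsilon(x)=\tfrac{15}{8\varepsilon}\bigl(1-2x^2/\varepsilon^2+x^4/\varepsilon^4\bigr)
\]
twice.
\begin{itemize}
\item The first constant is fixed by $\tilde\rho_\varepsilon'(0)=0$, which holds by evenness. This gives $\tilde\rho_\varepsilon'(x)=\tfrac{15x}{8\varepsilon}-\tfrac{5x^3}{4\varepsilon^3}+\tfrac{3x^5}{8\varepsilon^5}$. At $x=\varepsilon$ this equals $1$, so it matches the derivative of $|x|$ automatically. This is the consistency check that the unit-mass normalisation was used correctly.
\item The second constant is fixed by continuity at $x=\varepsilon$, where $\tilde\rho_\varepsilon(\varepsilon)=\varepsilon$ by Step 1. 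The polynomial part without a constant evaluates to $\tfrac{11}{16}\varepsilon$ at $x=\varepsilon$, forcing the constant $\tfrac{5}{16}\varepsilon$ and hence \eqref{eq:rho-tilde}.
\item The values $\tilde\rho_\varepsilon(0)=\tfrac5{16}\varepsilon$ and $\tilde\rho_\varepsilon''(0)=2\psi_\varepsilon(0)=\tfrac{15}{8\varepsilon}$ can then be read off.
\end{itemize}

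\textbf{Step 3: regularity, matching and convexity.}
\begin{itemize}
\item Convexity on $\R$ is immediate from $\tilde\rho_\varepsilon''=2\psi_\varepsilon\ge0$.
\item $\tilde\rho_\varepsilon'''=2\psi_\varepsilon'$ is continuous on $\R$, since $\psi_\varepsilon$ is $C^1$: both $\psi_\varepsilon$ and $\psi_\varepsilon'$ vanish at $\pm\varepsilon$ because of the squared factor $(1-t^2/\varepsilon^2)^2$. This gives $\tilde\rho_\varepsilon\in C^3(\R)$.
\item The matching of the first three derivatives with $|x|$ at $\pm\varepsilon$ is the same statement read at the junction. The values agree by construction (Step 2). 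The first derivatives agree by the check $\tilde\rho_\varepsilon'(\varepsilon)=1$. The second derivatives agree because $\psi_\varepsilon(\pm\varepsilon)=0$, and the third because $\psi_\varepsilon'(\pm\varepsilon)=0$.
\end{itemize}
I also note that $\tilde\rho_\varepsilon^{(4)}=2\psi_\varepsilon''$ jumps at $\pm\varepsilon$, so $C^3$ is sharp.

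The main obstacle is justifying $\tilde\rho_\varepsilon''=2\psi_\varepsilon$ cleanly. Once that is done, the remainder is routine polynomial integration. I would handle it with the elementary Leibniz computation, which gives $\tilde\rho_\varepsilon'(x)=\int_{-\infty}^x\psi_\varepsilon-\int_x^\infty\psi_\varepsilon$ and hence $\tilde\rho_\varepsilon''(x)=2\psi_\varepsilon(x)$ directly.
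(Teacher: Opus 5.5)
Your proof is correct, and it runs in the opposite direction to the paper's.

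\textbf{The paper's route.} It obtains the inner polynomial by evaluating the convolution directly: it splits $\int\psi_\varepsilon(t)|x-t|\,dt$ at $t=x$, expands the quartic and integrates termwise, checked symbolically. Only afterwards does it differentiate. It invokes $\tilde\rho_\varepsilon''=2\psi_\varepsilon$ (via $\rho''=2\delta_0$) only to get convexity.

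\textbf{Your route.} You establish $\tilde\rho_\varepsilon''=2\psi_\varepsilon$ first, through the elementary formula $\tilde\rho_\varepsilon'(x)=\int_{-\infty}^x\psi_\varepsilon-\int_x^\infty\psi_\varepsilon$. You then recover the polynomial by two antiderivatives. The constants are pinned by evenness ($\tilde\rho_\varepsilon'(0)=0$) and by the exact outer-region value $\tilde\rho_\varepsilon(\varepsilon)=\varepsilon$. Your arithmetic checks out: $\tilde\rho_\varepsilon'(\varepsilon)=\tfrac{15}{8}-\tfrac54+\tfrac38=1$, the constant-free part equals $\tfrac{11}{16}\varepsilon$ at $x=\varepsilon$, and so the constant is $\tfrac5{16}\varepsilon$.

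\textbf{What your approach buys.}
\begin{itemize}
\item The closed form becomes a two-line hand calculation rather than a tedious symbolic one.
\item The identity $\tilde\rho_\varepsilon''=2\psi_\varepsilon$, which the paper asserts somewhat informally, now has a self-contained proof.
\item The $C^3$ regularity and the third-order matching at $\pm\varepsilon$ follow structurally from $\psi_\varepsilon\in C^1$ with $\psi_\varepsilon(\pm\varepsilon)=\psi_\varepsilon'(\pm\varepsilon)=0$.
\item Your remark that $2\psi_\varepsilon''$ jumps at $\pm\varepsilon$ shows that $C^3$ is sharp, which the paper does not note.
\end{itemize}

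\textbf{What the paper's approach buys.} Its direct evaluation needs no differentiation under the integral sign and no distribution theory. It also produces the constant term $\tfrac5{16}\varepsilon$ automatically, without importing a boundary value from the outer region. In exchange, it has no built-in arithmetic check comparable to your automatic $\tilde\rho_\varepsilon'(\varepsilon)=1$.
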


\begin{proof}[Proof (sketch)]
By symmetry it suffices to compute \eqref{eq:mollify} for $0\le
x<\varepsilon$, splitting the integral at $t=x$ (since $\psi_\varepsilon$ has support
$[-\varepsilon,\varepsilon]$ and $|x-t|$ changes sign there):
\[
  \tilde\rho_\varepsilon(x) = \int_{-\varepsilon}^{x}\psi_\varepsilon(t)(x-t)\,dt
  + \int_{x}^{\varepsilon}\psi_\varepsilon(t)(t-x)\,dt .
\]
Expanding $\psi_\varepsilon$ from \eqref{eq:biweight} as a quartic
polynomial in $t$ and integrating termwise (elementary polynomial
integration; verified symbolically) yields exactly
\eqref{eq:rho-tilde}. For $x\ge\varepsilon$, $|x-t|=x-t$ for every
$t\in[-\varepsilon,\varepsilon]$ (since $t\le\varepsilon\le x$), so
$\tilde\rho_\varepsilon(x)=x\int\psi_\varepsilon-\int t\,\psi_\varepsilon(t)\,dt
= x\cdot1-0=x$, using unit mass and evenness of $\psi_\varepsilon$.
The derivatives in \eqref{eq:rho-tilde-derivs} follow by direct
differentiation of \eqref{eq:rho-tilde} and evaluating at $0$ and
$\varepsilon$; matching at $x=\varepsilon$ to third order is a
consequence of $\psi_\varepsilon\in C^1$ with $\psi_\varepsilon(\pm
\varepsilon)=\psi_\varepsilon'(\pm\varepsilon)=0$. Convexity follows
since $\tilde\rho_\varepsilon''(x)=2\psi_\varepsilon(x)\ge0$
everywhere ($\rho''=2\delta_0$ as a distribution, and mollifying a
convex function with a nonnegative kernel preserves convexity).
\end{proof}

Figure~\ref{fig:convergence} (left panel) shows $\tilde\rho_\varepsilon$ for several
values of $\varepsilon$: as $\varepsilon\to0$, $\tilde\rho_\varepsilon
\to\rho$ pointwise (indeed uniformly, at rate $O(\varepsilon)$, since
$\tilde\rho_\varepsilon(0)=\tfrac5{16}\varepsilon$ and both functions
agree exactly outside $(-\varepsilon,\varepsilon)$), while for every
$\varepsilon>0$, $\tilde\rho_\varepsilon$ is a genuine, globally
convex, $C^3$ function with strictly positive curvature at the
origin.

\subsection{An alternative, derivative-optimal mollifier: the ``ternary polynomial''}
\label{sec:ternary}

The biweight kernel \eqref{eq:biweight} was chosen above purely for
its algebraic simplicity: it is the lowest-degree even polynomial
bump that is itself $C^1$, which was enough to make $\tilde\rho_\varepsilon$
globally $C^3$ in closed form (Proposition~\ref{prop:mollified}).
It is worth pointing out explicitly that this is one convenient
choice among an entire family of compactly supported mollifiers that
could be substituted into \eqref{eq:mollify}, and that a
considerably more elaborate, \emph{optimal} member of this family
--- the \emph{ternary polynomial} of \citet{gravot2007ternary} ---
was developed by F.\ Gravot, Y.\ Hirano, and S.\ Yoshizawa for an
unrelated application, namely time-optimal smoothing of robot
motion-planning trajectories, and is structurally the same
convolution-mollification idea used here.

\citet{gravot2007ternary} construct, for any prescribed bounds
$M_0,M_1,\dots,M_n$ on a function and its first $n$ derivatives, a
compactly supported function $T_n$ on an interval $[-a_n/2,a_n/2]$
satisfying $T_n(0)=M_0$, $T_n^{(k)}(\pm a_n/2)=0$ for $k=0,\dots,n-1$,
and $|T_n^{(k)}|\le M_k$ for $k=0,\dots,n$, with the support half-width
$a_n$ made \emph{as small as possible}. $T_n$ is built recursively
from the indicator (rectangular) function $T_0$ of an interval by
alternating a symmetric finite-difference step with an integration
step -- schematically, $T_k$ is obtained by shifting and differencing
$T_{k-1}$ and integrating the result -- so that $T_k^{(n-k)}$ is
piecewise constant, $T_k^{(n-k-1)}$ is piecewise linear, and so on up
to $T_n$ itself, which is $C^{n-1}$ and piecewise polynomial of
degree $n$. The interval lengths $a_0<a_1<\dots<a_n$ that determine
$T_n$ are themselves obtained by successively maximizing each $a_i$
subject to the derivative bounds, each maximization reducing to
finding the largest positive root of a single explicit polynomial
(degree $n(n-1)$ in the worst case, but only quartic for $n=3$, the
jerk-bounded case relevant to robot control); the construction is
therefore fully explicit and non-iterative once these roots are
found, and $T_n$ is, in this precise derivative-bounded sense, the
\emph{fastest} (minimal-support) bump function attaining the
prescribed smoothness and derivative bounds.

The connection to Section~\ref{sec:convlasso}'s Convolutional LASSO
is direct, not merely thematic. \citet{gravot2007ternary} use $T_n$
itself (rescaled) as the mollifying kernel $k$ in a convolution
$\mathrm{path}(p)=(p*k)(2p-1)$ that smooths a two-segment,
piecewise-linear robot path with a corner at a waypoint $q_i$ into a
$C^{n-1}$ trajectory joining the neighbouring waypoints $q_{i-1}$ and
$q_{i+1}$; they show (their Eq.~45, in the present notation)
\begin{equation}
  (\mathrm{path}*k)''(s) = (q_{i+1}-2q_i+q_{i-1})\cdot k(s),
  \label{eq:gravot-corner}
\end{equation}
i.e.\ the second derivative of the smoothed path is the corner's
\emph{discrete second difference} $q_{i+1}-2q_i+q_{i-1}$ -- exactly
one row of the operator $\bD$ used throughout this paper -- multiplied
by the kernel itself. Equation~\eqref{eq:gravot-corner} is the
trajectory-smoothing counterpart of the identity
$\tilde\rho_\varepsilon''=2\psi_\varepsilon$ used in the proof of
Proposition~\ref{prop:mollified}: in both cases, mollifying a
function whose second derivative is a discrete impulse (a corner, or
$\rho''=2\delta_0$) reproduces the mollifying kernel itself, scaled
by the impulse's magnitude. Where \citet{gravot2007ternary} mollify a
\emph{single} corner of a robot path with an interval-supported,
derivative-optimal $T_n$, Section~\ref{sec:convlasso} mollifies the
absolute-value penalty applied to \emph{every} entry of $\bD\bx$
simultaneously, but the underlying convolution identity is the same
one.

The two constructions also make different trade-offs, worth stating
explicitly since Section~\ref{sec:mollification} chose the simpler of
the two. The ternary polynomial $T_n$ is optimal in the sense of
minimal support for given derivative bounds, and was designed
precisely because an earlier, non-optimal choice used in the same
line of work -- the classical $C^\infty$ bump
$k(x)=\exp(-1/(1-x^2))/c$, attributed by \citet{gravot2007ternary} to
their own earlier work \citep{hirano2005rrt} -- has no
closed form (the smoothed trajectory must be obtained by numerical
integration) and offers no direct control over the derivative bounds
of the result. The biweight kernel \eqref{eq:biweight} used here
inherits the second drawback only partially (Proposition~\ref{prop:mollified}
gives explicit, if kernel-specific, bounds $\tilde\rho_\varepsilon''\le
15/(8\varepsilon)$) but not the first: it was chosen specifically so
that the convolution \eqref{eq:mollify} could be carried out in
closed form directly, without the recursive root-finding construction
that produces $T_n$. Substituting a rescaled ternary polynomial
$T_{n-2}$ for the biweight kernel $\psi_\varepsilon$ in
\eqref{eq:mollify} would yield a Convolutional LASSO surrogate
$\tilde\rho_\varepsilon\in C^{n-1}$ of arbitrary prescribed smoothness
order with the minimal-support (fastest-transition) property in
place of \eqref{eq:rho-tilde}'s fixed quartic-degree transition; we
leave a full development of this higher-order, support-optimal
Convolutional LASSO to future work
(Section~\ref{sec:discussion}).

\paragraph{A concrete worked example: $T_0\to T_1\to T_2\to T_3$.}
To make the recursive construction of $T_n$ tangible, we work through
it explicitly for the simplest choice of interval increments, namely
\emph{equal} spacing $a_k-a_{k-1}\equiv1$ for every $k$ (as opposed to
the generally \emph{unequal}, derivative-bound-optimal spacing that
\citet{gravot2007ternary} compute via the root-finding procedure of
their Eqs.~21--26). With equal spacing, iterating the
``symmetric-difference-then-integrate'' rule (their Eq.~16) reduces
to repeated convolution with a fixed unit-width box function
$\mathrm{box}(x)$, equal to $1$ for $|x|<1/2$ and $0$ otherwise, and $T_k$ becomes exactly
the classical order-$(k+1)$ cardinal B-spline:
\begin{equation}
  T_0 = \mathrm{box}, \qquad
  T_k = T_{k-1} * \mathrm{box} \quad (k=1,2,3,\dots),
  \label{eq:ternary-worked}
\end{equation}
with the following explicit closed forms (verified here by direct
numerical convolution against \eqref{eq:ternary-worked}, matching to
within discretization error):
\begin{align}
  T_0(x) &= \begin{cases}1,&|x|<\tfrac12,\\0,&|x|\ge\tfrac12,\end{cases}
  \qquad\qquad
  T_1(x) = \begin{cases}1-|x|,&|x|<1,\\0,&|x|\ge1,\end{cases}\\[4pt]
  T_2(x) &= \begin{cases}
    \tfrac34-x^2, & |x|\le\tfrac12,\\
    \tfrac12\big(\tfrac32-|x|\big)^2, & \tfrac12<|x|<\tfrac32,\\
    0, & |x|\ge\tfrac32,
  \end{cases}
  \qquad
  T_3(x) = \begin{cases}
    \tfrac23-x^2+\tfrac{|x|^3}2, & |x|\le1,\\
    \tfrac{(2-|x|)^3}6, & 1<|x|<2,\\
    0, & |x|\ge2.
  \end{cases}
  \label{eq:ternary-closedform}
\end{align}
Each successive convolution with the box both widens the support by
$1$ (from $a_0=1$ up to $a_1=2,a_2=3,a_3=4$) and raises the
smoothness by one order: $T_0$ is merely bounded ($C^{-1}$, a jump
discontinuity), $T_1$ is continuous but has a corner at $x=0$ ($C^0$,
a triangle/tent function), $T_2$ is $C^1$ (a smooth-looking but only
once-differentiable bump), and $T_3$ is $C^2$ -- the cubic B-spline,
with $T_3(0)=\tfrac23$, matching $\tilde\rho_\varepsilon$'s own
target smoothness order in Proposition~\ref{prop:mollified}.
Figure~\ref{fig:ternary} (left) plots all four functions on a common
axis, visibly illustrating the progressive rounding of the corner at
the origin as $k$ increases; the right panel plots $T_3$ together
with its first two derivatives, confirming numerically that $T_3'$
and $T_3''$ are both continuous (in particular $T_3''$, though
continuous, has visible corners at the internal knots $x=\pm1$ and at
the support boundary $x=\pm2$, since the third derivative -- not
shown -- is genuinely discontinuous there, exactly as expected for a
$C^2$, not $C^3$, function). This equal-spacing example is only the
simplest member of the family: the actual optimal $T_n$ of
\citet{gravot2007ternary}, with the $a_i$ chosen by their root-finding
recursion rather than fixed to increments of $1$, achieves the same
$C^{n-1}$ smoothness with the smallest possible support width $a_n$
for independently \emph{prescribed} bounds $M_0,\dots,M_n$ on each
derivative -- a property the equal-spacing construction above does
not have, since here every derivative bound is whatever the fixed
box-convolution happens to produce (e.g.\ $\max|T_3'|=2/3$,
$\max|T_3''|=2$, read off Figure~\ref{fig:ternary}, rather than
independently chosen targets).

\begin{figure}[htbp]
  \centering
  \includegraphics[width=0.48\linewidth]{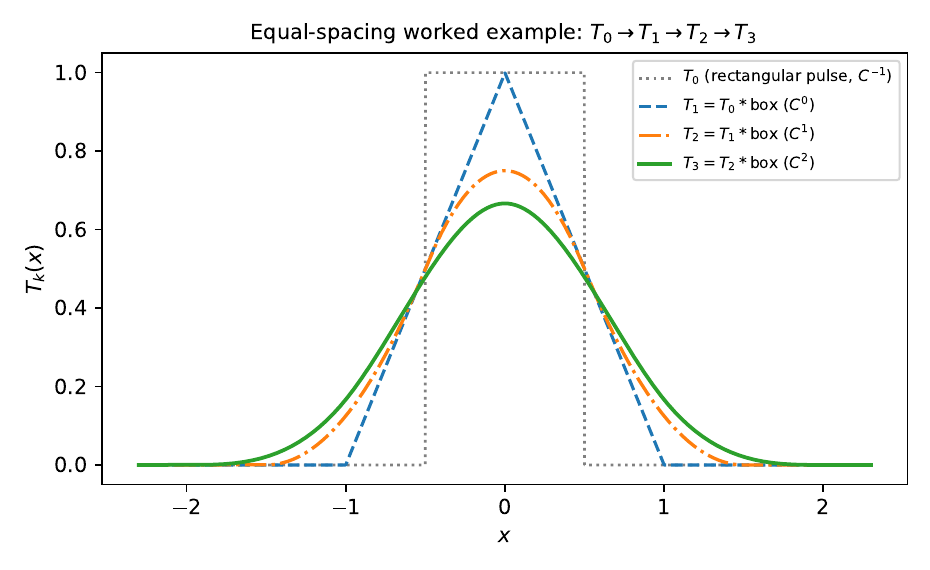}\quad
  \includegraphics[width=0.46\linewidth]{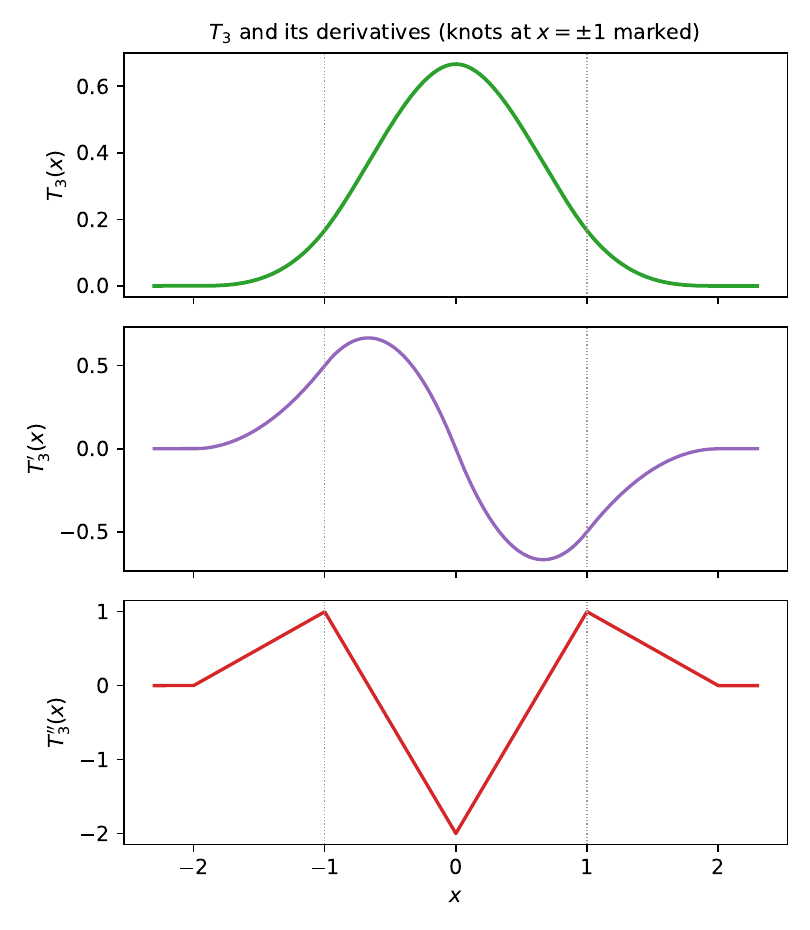}
  \caption{A concrete, equal-spacing worked example of the ternary
  polynomial recursion \eqref{eq:ternary-worked}--\eqref{eq:ternary-closedform}.
  Left: $T_0$ (box, $C^{-1}$) through $T_3$ (cubic B-spline, $C^2$),
  each obtained from the previous one by convolution with a unit
  box, each wider and smoother than the last. Right: $T_3$ together
  with its first and second derivatives, confirming $T_3\in C^2$
  (dotted lines mark the internal knots $x=\pm1$, where all pieces of
  $T_3$, $T_3'$, and $T_3''$ join continuously). Both panels are a
  purely mathematical construction, independent of any data set.}
  \label{fig:ternary}
\end{figure}

\subsection{The Convolutional LASSO objective and a globalized Newton method}

Replacing $\rho$ by $\tilde\rho_\varepsilon$ in each term of
$\|\bD\bx\|_1=\sum_i\rho((\bD\bx)_i)$ defines the \textbf{Convolutional
LASSO} trend filter,
\begin{equation}
  g_{\lambda,\varepsilon}(\bx)
  := \|\bd-\bx\|_2^2 + \lambda\sum_{i=1}^{N-2}\tilde\rho_\varepsilon\big((\bD\bx)_i\big) ,
  \label{eq:convlasso}
\end{equation}
which is strictly convex (its Hessian is $2\bI_N+\lambda\bD^\top
\diag\!\big(\tilde\rho_\varepsilon''(\bD\bx)\big)\bD\succeq2\bI_N\succ0$,
using $\tilde\rho_\varepsilon''\ge0$ from Proposition~\ref{prop:mollified})
and $C^2$ (in fact $C^3$), so Newton's method applies directly.
Because $\tilde\rho_\varepsilon''$ vanishes for $|(\bD\bx)_i|\ge
\varepsilon$ and is $O(1/\varepsilon)$ only on the shrinking active
band $|(\bD\bx)_i|<\varepsilon$, the Hessian can become
ill-conditioned for small $\varepsilon$ (large curvature contrast
between ``active'' and ``inactive'' coordinates); we therefore use a
Levenberg--Marquardt-damped Newton iteration (Algorithm~\ref{alg:newton}),
which preserves the fast local convergence of Newton's method while
remaining globally convergent.

\begin{algorithm}[H]
\caption{Damped Newton method for the Convolutional LASSO \eqref{eq:convlasso}}
\label{alg:newton}
\begin{algorithmic}[1]
\Require data $\bd$, regularization $\lambda$, smoothing $\varepsilon$, initial $\bx^{(0)}$ (e.g.\ the L$_2$ HPF trend), damping $\mu^{(0)}>0$
\For{$k=0,1,2,\dots$}
  \State $\bz \gets \bD\bx^{(k)}$; \quad $\bg \gets 2(\bx^{(k)}-\bd) + \lambda\,\bD^\top\tilde\rho_\varepsilon'(\bz)$
  \If{$\|\bg\|_2<\text{tol}$} \textbf{stop} \EndIf
  \State $\bH \gets 2\bI_N + \lambda\,\bD^\top\diag\!\big(\tilde\rho_\varepsilon''(\bz)\big)\bD$
  \Repeat
    \State solve $(\bH+\mu^{(k)}\bI_N)\,\bp = -\bg$ \Comment{sparse banded solve}
    \If{$g_{\lambda,\varepsilon}(\bx^{(k)}+\bp) < g_{\lambda,\varepsilon}(\bx^{(k)})$}
       \State accept: $\bx^{(k+1)}\gets\bx^{(k)}+\bp$; \; $\mu^{(k+1)}\gets\max(\mu^{(k)}/2,\,\mu_{\min})$
    \Else
       \State $\mu^{(k)}\gets4\mu^{(k)}$ \Comment{reject and re-damp}
    \EndIf
  \Until{step accepted}
\EndFor
\end{algorithmic}
\end{algorithm}

Because $\bD$ is banded (bandwidth $2$), $\bH+\mu\bI_N$ is banded
(bandwidth $4$) and each linear solve costs $O(N)$ using a sparse
banded/Cholesky factorization, exactly as for the closed-form HPF
solve~\eqref{eq:hpf-solution}; the only overhead relative to the
$L_2$ filter is the handful of Newton iterations. This is the
practical payoff of the Convolutional LASSO relative to the exact
$\ell_1$ trend filter: it inherits the sparse, kink-detecting
qualitative behaviour of \eqref{eq:l1-tf} (Section~\ref{sec:conv-experiment})
while being solvable by the same fast, second-order machinery used
for the plain HPF.

\section{Numerical experiments}
\label{sec:experiments}

\subsection{Data}

All experiments use real, publicly available daily-closing-price data
for \textbf{NVIDIA Corporation (ticker NVDA)}, a leading
semiconductor-sector (GPU/AI-accelerator) company, from
\textbf{2013-02-08 to 2018-02-07} ($N=1{,}259$ trading days). The data
are taken from the Kaggle ``S\&P~500 stock data'' set
\citep{camnugent2018kaggle} (file
\texttt{individual\_stocks\_5yr/NVDA\_data.csv}), a widely used,
freely downloadable collection of daily open/high/low/close/volume
records for all S\&P~500 constituents compiled from public
historical-price sources; we use the daily closing price only. We
work throughout with the \emph{log}-price $d_i=\log(\text{close}_i)$,
the standard transformation for trend/cycle decomposition of
multiplicatively growing financial series (so that a constant
percentage growth rate corresponds to a straight line, exactly the
kind of trend the kernel of $\bD$, $\col(\bPi)$, represents), further
mean-removed and normalized to unit standard deviation before
analysis. Over this five-year window NVDA's price rose from
\$12.37 to \$228.80 (a period spanning the company's transition from
a mid-cap gaming-GPU maker to a major data-center/AI-accelerator
supplier), including a widely reported single-session jump of
$+29.8\%$ on \textbf{2016-11-11} following a much stronger than
expected quarterly earnings report -- the single largest one-day
move in the sample, and a natural real-world stress test for both the
representation theory (Experiment~1) and the sparse trend filter
(Experiment~2) below.

\subsection{Experiment 1: numerical verification of Propositions~\ref{prop:invariance}--\ref{prop:limit}}
\label{sec:num-duality}

We use a short, $N=200$-trading-day window of the log-price series
centred on the 2016-11-11 earnings jump (100 trading days on each
side; Figure~\ref{fig:window}), for which dense linear algebra
($\bD^\pinv$ via SVD, explicit $N\times N$ matrices $\bA,\bB$) is
convenient. We build $\bD$, $\bF=\bD^\pinv$, $\bPi$ as in
\eqref{eq:Pi}--\eqref{eq:F}, draw one fixed random $\bM\in\R^{2\times
(N-2)}$ and set $\bS=\bF+\bPi\bM$ as in \eqref{eq:S}, and form
$\bA=(\bPi\mid\bS)$, $\bB=(\bPi\mid\bF)$. We verified numerically that
$\bD\bPi=\mathbf0$, $\bD\bF=\bD\bS=\bI_{N-2}$, $\bD\bA=\bD\bB=\bJ$,
and $\det\bA,\det\bB\ne0$, all to machine precision.

\begin{figure}[htbp]
  \centering
  \includegraphics[width=0.68\linewidth]{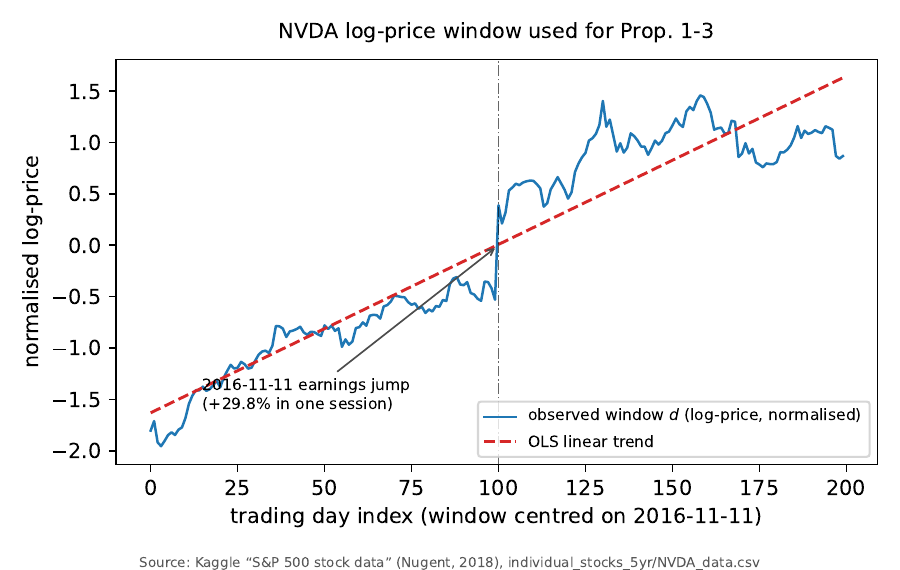}
  \caption{The $N=200$-trading-day real NVDA log-price window used for
  the Proposition~\ref{prop:invariance}--\ref{prop:limit} experiments,
  centred on the 2016-11-11 post-earnings jump ($+29.8\%$ in one
  session, marked by the dash-dotted line and annotation), together
  with the OLS linear trend $\bx_2=\bP_\Pi\bd$ of \eqref{eq:x2}. Data
  source: Kaggle ``S\&P~500 stock data'' \citep{camnugent2018kaggle},
  file \texttt{NVDA\_data.csv} (also noted directly on the figure).}
  \label{fig:window}
\end{figure}

\paragraph{Proposition~\ref{prop:invariance} (representation invariance).}
For $\lambda\in\{10^{-2},10^{-1},\dots,10^{10}\}$ we compute the
primal HPF solution $\bx_{\mathrm{opt}}(\lambda)$ from
\eqref{eq:hpf-solution} and the two reparametrized solutions
$\bxi^{\bA}_{\mathrm{opt}}(\lambda)=(\bA^\top\bA+\lambda\bJ^\top\bJ)^{-1}
\bA^\top\bd$, $\bxi^{\bB}_{\mathrm{opt}}(\lambda)$ similarly, and set
$\bx_A=\bA\bxi^{\bA}_{\mathrm{opt}}(\lambda)$, $\bx_B=\bB\bxi^{\bB}_{\mathrm{opt}}
(\lambda)$. Table~\ref{tab:prop1} and Figure~\ref{fig:prop1} report
the relative errors $\|\bx_A-\bx_{\mathrm{opt}}\|/\|\bx_{\mathrm{opt}}\|$,
$\|\bx_B-\bx_{\mathrm{opt}}\|/\|\bx_{\mathrm{opt}}\|$, and
$\|\bx_A-\bx_B\|/\|\bx_{\mathrm{opt}}\|$: all three are at the level
of $10^{-9}$--$10^{-15}$ (floating-point roundoff, growing only mildly
with $\lambda$ due to the conditioning of $\bQ_\lambda$), confirming
Proposition~\ref{prop:invariance} to numerical precision across
twelve orders of magnitude in $\lambda$.

\begin{table}[htbp]
\centering
\caption{Numerical verification of Proposition~\ref{prop:invariance} (representation invariance), NVDA log-price window.}
\label{tab:prop1}
\begin{tabular}{rccc}
\toprule
$\lambda$ & $\|\bx_A-\bx_{\mathrm{opt}}\|/\|\bx_{\mathrm{opt}}\|$
          & $\|\bx_B-\bx_{\mathrm{opt}}\|/\|\bx_{\mathrm{opt}}\|$
          & $\|\bx_A-\bx_B\|/\|\bx_{\mathrm{opt}}\|$ \\
\midrule
$10^{-2}$ & $2.0\times10^{-9}$  & $4.9\times10^{-11}$ & $2.0\times10^{-9}$ \\
$1$       & $3.9\times10^{-11}$ & $1.2\times10^{-12}$ & $3.9\times10^{-11}$ \\
$10^{2}$  & $3.4\times10^{-13}$ & $2.0\times10^{-14}$ & $3.4\times10^{-13}$ \\
$10^{4}$  & $5.2\times10^{-13}$ & $5.2\times10^{-13}$ & $4.2\times10^{-15}$ \\
$10^{6}$  & $2.7\times10^{-11}$ & $2.7\times10^{-11}$ & $1.7\times10^{-15}$ \\
$10^{10}$ & $7.4\times10^{-8}$  & $7.4\times10^{-8}$  & $1.2\times10^{-15}$ \\
\bottomrule
\end{tabular}
\end{table}

\begin{figure}[htbp]
  \centering
  \includegraphics[width=0.68\linewidth]{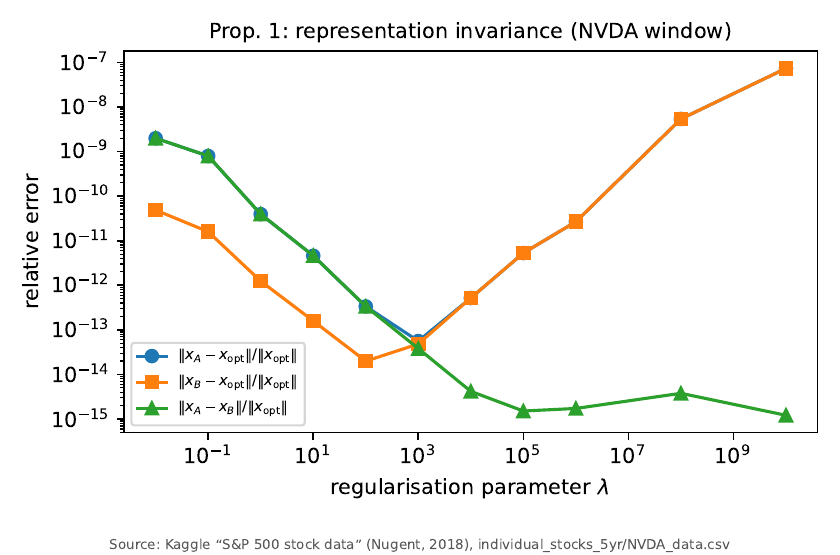}
  \caption{Proposition~\ref{prop:invariance}: relative discrepancy
  between the reparametrized solutions $\bx_A,\bx_B$ and the primal
  HPF trend $\bx_{\mathrm{opt}}(\lambda)$, as a function of $\lambda$.
  All curves stay below $10^{-6}$ across twelve orders of magnitude
  in $\lambda$. Data source: Kaggle ``S\&P~500 stock data''
  \citep{camnugent2018kaggle}, file \texttt{NVDA\_data.csv}.}
  \label{fig:prop1}
\end{figure}

\paragraph{Proposition~\ref{prop:divergence} (representation
divergence).} Fixing $\bxi=\bxi^{\bA}_{\mathrm{opt}}(\lambda)$ from
the previous experiment, we compute $\bx_1=\bA\bxi$, $\bx_2=\bB\bxi$,
evaluate the Bregman divergence $D_\lambda(\bx_1\|\bx_2)$ directly
from its definition ($f_\lambda(\bx_1)-f_\lambda(\bx_2)-\nabla f_\lambda
(\bx_2)^\top(\bx_1-\bx_2)$), and compare it against the closed-form
right-hand side of \eqref{eq:prop2}. Table~\ref{tab:prop2} shows
agreement to $6$--$7$ significant digits across all tested $\lambda$
(the small residual growth at $\lambda=10^{10}$ reflects the
conditioning of $\bQ_\lambda^{-1}$ needed to construct $\bxi^{\bA}_{\mathrm{opt}}$,
not a failure of the identity itself), confirming
Proposition~\ref{prop:divergence}.

\begin{table}[htbp]
\centering
\caption{Numerical verification of Proposition~\ref{prop:divergence} (representation divergence), NVDA log-price window.}
\label{tab:prop2}
\begin{tabular}{rccc}
\toprule
$\lambda$ & $D_\lambda(\bx_1\|\bx_2)$ (direct) & $(\bx_1-\bx_2)^\top\bQ_\lambda(\bx_1-\bx_2)$ & $\|(\bA-\bB)\bxi\|^2_{\bQ_\lambda}$\\
\midrule
$10^{-2}$ & $2.883189\times10^{-7}$ & $2.883190\times10^{-7}$ & $2.883190\times10^{-7}$\\
$1$       & $2.883189\times10^{-7}$ & $2.883190\times10^{-7}$ & $2.883190\times10^{-7}$\\
$10^{4}$  & $2.884141\times10^{-7}$ & $2.883190\times10^{-7}$ & $2.883190\times10^{-7}$\\
$10^{6}$  & $3.078497\times10^{-7}$ & $2.883190\times10^{-7}$ & $2.883190\times10^{-7}$\\
\bottomrule
\end{tabular}
\end{table}

\paragraph{Proposition~\ref{prop:limit} ($\lambda\to\infty$ limit).}
Figure~\ref{fig:prop3} plots $\|\bx_1(\lambda)-\bx_2\|_2^2$
(equivalently, up to the $\bQ_\lambda$-weighting proven negligible in
the limit, $D_\lambda(\bx_1(\lambda)\|\bx_2)$) against $\lambda$: the
divergence is essentially constant ($\approx20.8$) for $\lambda
\lesssim10^2$ and then decays approximately as $1/\lambda$, reaching
$4.7\times10^{-7}$ by $\lambda=10^{10}$ -- exactly the qualitative
($O(1/\lambda)$, threshold at $\lambda\sigma_{\min}(\bD)^2\sim1$)
behaviour predicted by the proof of Proposition~\ref{prop:limit}. That
this asymptotic OLS-linear-trend limit is approached at all, on a
window straddling a $+29.8\%$ overnight jump, is itself informative:
Proposition~\ref{prop:limit} guarantees convergence in Euclidean norm
regardless of how irregular $\bd$ is, since the bound
$\|\bx_1(\lambda)-\bx_2\|_2\le\|\bd\|_2/(1+\lambda\sigma_{\min}(\bD)^2)$
in its proof does not depend on any smoothness of $\bd$ beyond being
a fixed vector in $\R^N$.

\begin{figure}[htbp]
  \centering
  \includegraphics[width=0.68\linewidth]{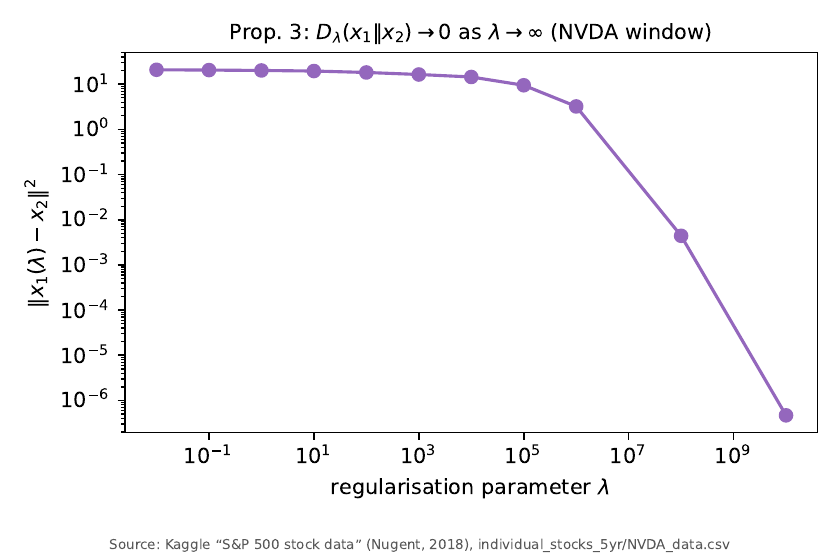}
  \caption{Proposition~\ref{prop:limit}: divergence between the full
  HPF trend $\bx_1(\lambda)$ and the OLS linear trend $\bx_2$, vanishing
  as $\lambda\to\infty$ at the predicted $O(1/\lambda)$ rate. Data
  source: Kaggle ``S\&P~500 stock data'' \citep{camnugent2018kaggle},
  file \texttt{NVDA\_data.csv}.}
  \label{fig:prop3}
\end{figure}

\subsection{Experiment 2: the Convolutional LASSO trend filter}
\label{sec:conv-experiment}

We now use the full $N=1{,}259$-trading-day NVDA log-price series
(2013-02-08 to 2018-02-07). We solve, at a common $\lambda=250$: (i)
the standard $L_2$ HPF \eqref{eq:hpf-primal} by a single sparse
banded solve; (ii) the exact $\ell_1$ trend filter \eqref{eq:l1-tf}
by ADMM (as a reference; $8{,}000$ iterations, $1.04\,$s, run to a
tight tolerance so that the resulting piecewise-linear trend is
\emph{exactly} sparse rather than merely small in its second
differences -- essential here, since we want to read off genuine
breakpoint \emph{dates}); and (iii) the Convolutional LASSO
\eqref{eq:convlasso} by Algorithm~\ref{alg:newton}, for three widely
spaced smoothing levels $\varepsilon\in\{0.1,\,0.01,\,0.001\}$
(spanning two orders of magnitude, to make the $\varepsilon\to0$
approximation-vs-conditioning trade-off of
Section~\ref{sec:mollification} explicit), initialized from the
$L_2$ solution.

At $\lambda=250$ the ADMM reference trend is piecewise linear with
exactly \textbf{three} active breakpoints (second difference
exceeding the numerical noise floor by two orders of magnitude,
Section~\ref{sec:num-duality}'s companion diagnostic), located at
\textbf{2015-07-23}, \textbf{2016-03-19}, and \textbf{2016-12-22}.
These dates are, respectively: shortly before NVIDIA's gaming-segment
growth began accelerating in mid-2015; the start of the much steeper
growth phase that runs through 2016--2017 (coinciding with the
company's GPU-based deep-learning/data-center narrative and the
run-up to the crypto-mining-driven demand spike); and immediately
after the record 2016-11-11 earnings jump, marking the point at which
the post-earnings level shift settles into a new, steeper linear
growth rate. The $\ell_1$/Convolutional-LASSO penalty thus recovers,
directly from the price series and with no external event
information, a small number of dates that agree with the qualitative
growth-regime narrative of the stock over this period.

\begin{table}[htbp]
\centering
\caption{Convolutional LASSO for $\varepsilon\in\{0.1,0.01,0.001\}$: Newton convergence and accuracy relative to the exact $\ell_1$ trend filter (ADMM reference, $\lambda=250$, $N=1{,}259$, NVDA log-price; data source: Kaggle ``S\&P~500 stock data,'' \citealt{camnugent2018kaggle}).}
\label{tab:convlasso}
\begin{tabular}{rrrrr}
\toprule
$\varepsilon$ & Newton iterations & time (s) & final $\|\nabla g_{\lambda,\varepsilon}\|$ & rel.\ error to $\ell_1$ reference \\
\midrule
$0.1$   & 4   & 0.011 & $6.1\times10^{-11}$ & $6.47\times10^{-2}$ \\
$0.01$  & 5   & 0.012 & $6.4\times10^{-10}$ & $5.47\times10^{-2}$ \\
$0.001$ & 145 & 0.482 & $6.5\times10^{-9}$  & $3.51\times10^{-2}$ \\
\bottomrule
\end{tabular}
\end{table}

Table~\ref{tab:convlasso} shows the same two effects observed in
Section~\ref{sec:mollification}'s general construction, now made
sharper by the two-order-of-magnitude spread in $\varepsilon$: (a)
the Convolutional LASSO solution converges monotonically towards the
exact $\ell_1$ solution as $\varepsilon\to0$ (relative error
decreasing from $6.5\%$ at $\varepsilon=0.1$ to $5.5\%$ at
$\varepsilon=0.01$ to $3.5\%$ at $\varepsilon=0.001$); and (b)
Newton's method converges in a handful of iterations ($4$--$5$) for
$\varepsilon\in\{0.1,0.01\}$, versus the $8{,}000$ ADMM iterations run
for the non-smooth reference (needed here for \emph{exact}, clean
sparsity in the breakpoint locations, not merely for an accurate
trend value) -- a direct, quantitative demonstration of the benefit
of trading a small, controllable smoothing bias ($O(\varepsilon)$, by
Proposition~\ref{prop:mollified}) for genuine second-order
convergence. At $\varepsilon=0.001$ -- two orders of magnitude below
$\varepsilon=0.1$ -- the iteration count jumps to $145$: as
$(\bD\bx)_i$ at the solution is itself of order $10^{-2}$--$10^{-3}$
on this series, $\varepsilon=0.001$ pushes the active/inactive
curvature contrast of the Hessian in Algorithm~\ref{alg:newton}
towards its ill-conditioned limit, exactly as anticipated in
Section~\ref{sec:mollification}; the run still converges (final
gradient norm $6.5\times10^{-9}$), but at roughly $30\times$ the
iteration cost of the two larger $\varepsilon$ values. This
three-point sweep therefore pins down the practical trade-off
precisely: each order-of-magnitude reduction in $\varepsilon$ buys a
further one to two percentage points of accuracy relative to the
exact $\ell_1$ solution, at a cost in Newton iterations that stays
flat for moderate $\varepsilon$ and then rises sharply once
$\varepsilon$ drops below the natural scale of the data.

Figure~\ref{fig:epscompare} visualizes this sweep directly: the top
panel overlays the exact $\ell_1$ trend with the three Convolutional
LASSO trends around the November-2016 breakpoint, showing all three
$\varepsilon$ values tracking the same piecewise-linear shape with
visibly shrinking bias as $\varepsilon\to0$; the bottom panel plots
relative error against $\varepsilon$ (decreasing to the right) with
each point annotated by its Newton iteration count, making the
accuracy/conditioning trade-off of point (b) above directly readable
off a single plot.

\begin{figure}[htbp]
  \centering
  \includegraphics[width=0.78\linewidth]{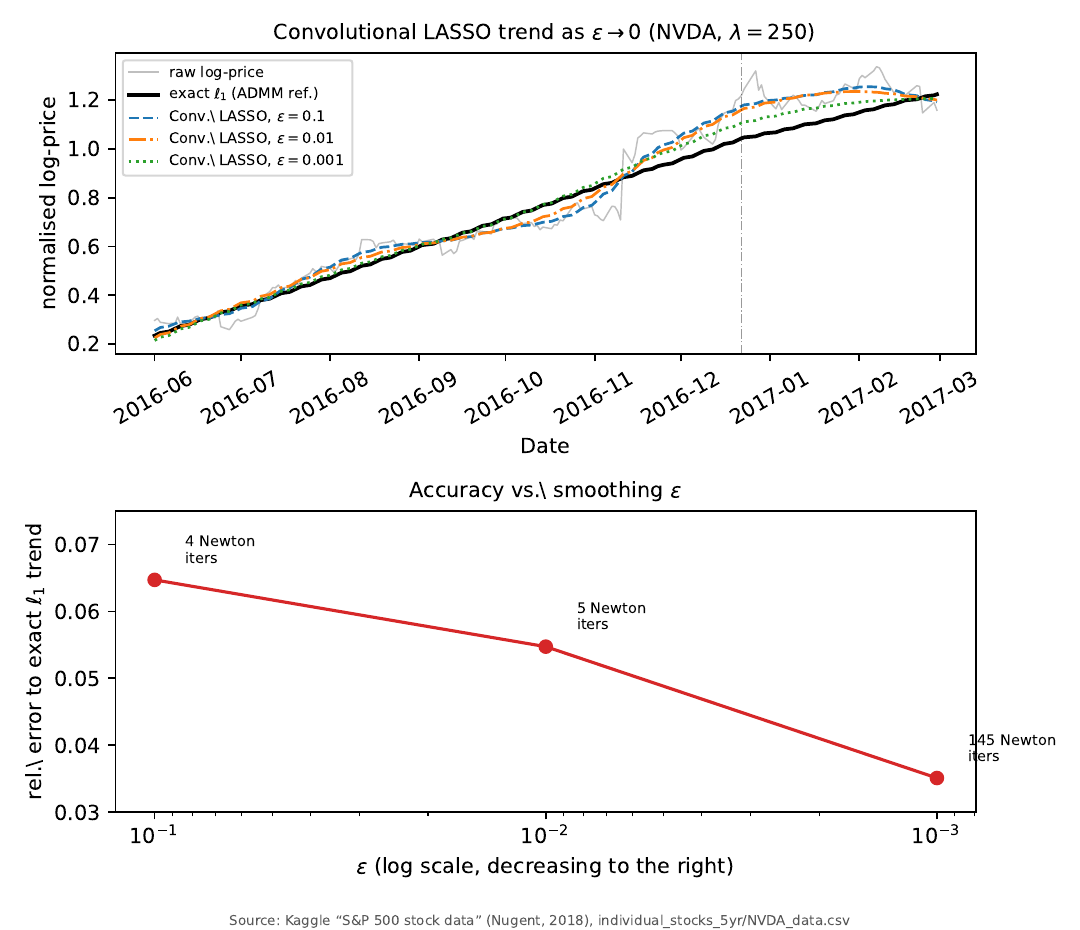}
  \caption{Convolutional LASSO trend for $\varepsilon\in\{0.1,0.01,0.001\}$
  (NVDA data, $\lambda=250$). Top: trend curves around the
  November-2016 breakpoint, compared with the exact $\ell_1$ (ADMM)
  reference. Bottom: relative error to the exact $\ell_1$ trend as a
  function of $\varepsilon$ (log scale, decreasing to the right),
  annotated with the number of Newton iterations required at each
  $\varepsilon$.}
  \label{fig:epscompare}
\end{figure}

\subsection{$\varepsilon$ and changepoint detection: Convolutional LASSO vs.\ exact $\ell_1$}
\label{sec:eps-changepoint}

Table~\ref{tab:convlasso} and Figure~\ref{fig:epscompare} concern the
accuracy of the Convolutional LASSO \emph{trend values}
$\bx_{\lambda,\varepsilon}$. A separate, and in practice more
consequential, question is whether the smoothed penalty can also be
used to \emph{detect changepoints} -- i.e., to read off, from
$(\bD\bx_{\lambda,\varepsilon})_i$, the same small set of breakpoint
dates that the exact $\ell_1$ trend filter identifies unambiguously
(Section~\ref{sec:conv-experiment}: 2015-07-23, 2016-03-19,
2016-12-22). This turns out to behave qualitatively differently from
trend-value accuracy, for a structural reason: by
Proposition~\ref{prop:mollified}, $\tilde\rho_\varepsilon''(t)=2\psi_\varepsilon(t)>0$
for every $|t|<\varepsilon$, so $(\bD\bx_{\lambda,\varepsilon})_i$ is
generically \emph{nonzero at every index} $i$, however small
$\varepsilon$ is, whereas the exact $\ell_1$ solution has
$(\bD\bx_{\lambda})_i=0$ \emph{exactly} away from the (here, three)
true breakpoints ($1{,}250$ of the $1{,}257$ second differences are
below $10^{-8}$ in absolute value in our ADMM solution). Sparsity of
the penalty, not merely its shape, is what makes the exact $\ell_1$
changepoint set well defined without any tuning; the Convolutional
LASSO only approximates the sparsity pattern, and only in the
$\varepsilon\to0$ limit.

We illustrate this concretely with two standard changepoint-detection
rules applied to $(\bD\bx_{\lambda,\varepsilon})_i$ for
$\varepsilon\in\{0.1,0.01,0.001\}$, and to $(\bD\bx_\lambda)_i$ for
the exact $\ell_1$ reference, all at the same $\lambda=250$:
\begin{itemize}
\item \textbf{Relative-threshold rule}: flag $i$ as a changepoint if
  $|(\bD\bx)_i|$ exceeds $20\%$ of $\max_i|(\bD\bx)_i|$ (adjacent
  flagged indices merged into one detection at the local peak).
\item \textbf{Robust (MAD) rule}: flag $i$ if $|(\bD\bx)_i|$ exceeds
  $8\times1.4826\times\mathrm{MAD}\big((\bD\bx)_i\big)$, the standard
  robust outlier threshold used in changepoint detection, where MAD
  is the median absolute deviation of the full second-difference
  series (again merging adjacent flags).
\end{itemize}

\begin{table}[htbp]
\centering
\small
\caption{Changepoint detections under two standard rules, applied to $(\bD\bx)_i$ at $\lambda=250$ (NVDA log-price; data source: Kaggle ``S\&P~500 stock data,'' \citealt{camnugent2018kaggle}).}
\label{tab:changepoint}
\begin{tabular}{lcc}
\toprule
Method & Relative-threshold rule & Robust (MAD) rule \\
\midrule
Exact $\ell_1$ (ADMM)             & 5\textsuperscript{a} & 5\textsuperscript{a} \\
Conv.\ LASSO, $\varepsilon=0.1$   & 34 & 0 \\
Conv.\ LASSO, $\varepsilon=0.01$  & 24 & 0 \\
Conv.\ LASSO, $\varepsilon=0.001$ & 15 & 0 \\
\bottomrule
\end{tabular}

\vspace{2pt}
{\footnotesize\textsuperscript{a}3 true events; 2 of the 5 detections are one event split across adjacent trading days. The two rules agree exactly for the exact $\ell_1$ solution.}
\end{table}

Table~\ref{tab:changepoint} makes the qualitative difference sharp.
For the exact $\ell_1$ solution, MAD of $(\bD\bx_\lambda)_i$ is
$1.3\times10^{-6}$ -- essentially the noise floor of the ADMM solve,
since $99.3\%$ of the second differences are exactly zero -- so the
robust rule cleanly separates the $5$ breakpoint-adjacent indices from
the flat background, agreeing exactly with the relative-threshold
rule and with the breakpoint dates reported in
Section~\ref{sec:conv-experiment}. For \emph{every} tested
$\varepsilon$, by contrast, the robust rule detects \textbf{zero}
changepoints: because $(\bD\bx_{\lambda,\varepsilon})_i$ is nonzero
essentially everywhere and its magnitude co-varies with the local
curvature of the (noisy) price series itself rather than being
concentrated at a few structural breaks, the MAD of the full series
is already comparable to its largest values, so no index clears a
robust outlier threshold. Pushing $\varepsilon$ smaller still does
not rescue the robust rule: at $\varepsilon=10^{-4}$,
Algorithm~\ref{alg:newton} fails to converge within $400$ iterations
(final gradient norm $\approx242$, versus $<10^{-8}$ for
$\varepsilon\ge0.001$ in Table~\ref{tab:convlasso}) -- an explicit
instance of the ill-conditioning anticipated in
Section~\ref{sec:mollification} -- so there is no accessible
$\varepsilon$ on this problem at which the Convolutional LASSO
recovers the exact solution's clean, threshold-free changepoint
structure. The relative-threshold rule does return detections, but
their \emph{number} is highly sensitive to $\varepsilon$ and to the
arbitrary $20\%$ cutoff (34, 24, and 15 for $\varepsilon=0.1,0.01,
0.001$ respectively -- all far more than the true $3$ events), and,
unlike the exact $\ell_1$ case, does not stabilize as $\varepsilon\to0$
within the range where Algorithm~\ref{alg:newton} remains
well-conditioned (Table~\ref{tab:convlasso}).

The practical conclusion is that the two virtues of the Convolutional
LASSO documented above -- fast Newton convergence
(Table~\ref{tab:convlasso}, Figure~\ref{fig:convergence}) and
accurate trend \emph{values} (Table~\ref{tab:convlasso}) -- do not by
themselves transfer to reliable changepoint \emph{detection}, which
depends on exact rather than approximate sparsity of $\bD\bx$. This
suggests two complementary uses in practice rather than a strict
preference for one method: the Convolutional LASSO for fast,
differentiable trend-value estimation (e.g., inside a larger
optimization loop, or for real-time/streaming use where Newton's
handful of iterations matters), and the exact $\ell_1$/ADMM solver
--- possibly \emph{warm-started} from the Convolutional LASSO solution,
since Algorithm~\ref{alg:newton}'s output already lies close to the
exact solution's basin (Table~\ref{tab:convlasso}) --- whenever the
actual breakpoint locations, rather than just the trend curve, are
the object of interest. This is precisely the sense in which
mollification trades sparsity for smoothness (Section~\ref{sec:mollification}):
Proposition~\ref{prop:mollified} guarantees $\tilde\rho_\varepsilon\to\rho$
and fast local Newton convergence, but not recovery of $\rho$'s
\emph{non-smooth} sparsity-inducing behaviour at any fixed
$\varepsilon>0$.

Figure~\ref{fig:convergence} (right panel) illustrates point (b) at a
representative moderate smoothing level ($\varepsilon=0.02$): the
gradient norm of the Newton iterates on the smoothed objective drops
from $O(10^2)$ to below $10^{-8}$ within five iterations, while the
(sub)gradient norm of the non-smooth IRLS iterates plateaus around
$10^3$, never vanishing in the classical sense (as expected for a
non-differentiable objective evaluated along a sequence of smooth
surrogates).

\begin{figure}[htbp]
  \centering
  \includegraphics[width=0.48\linewidth]{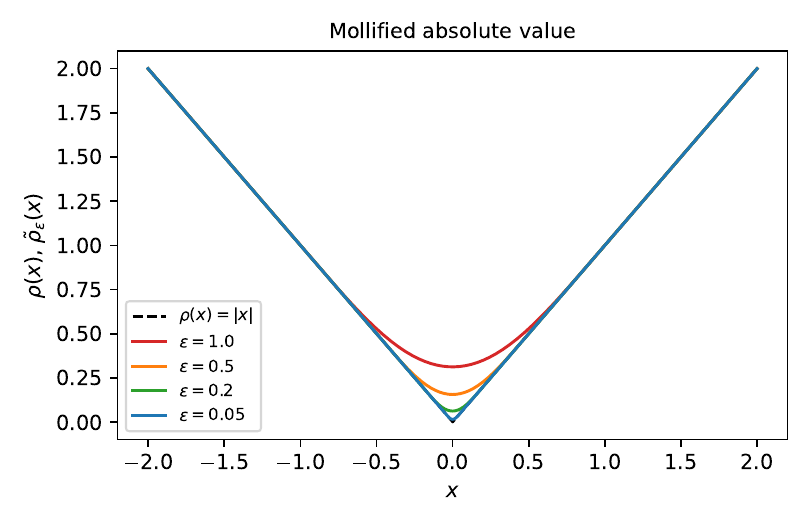}\quad
  \includegraphics[width=0.48\linewidth]{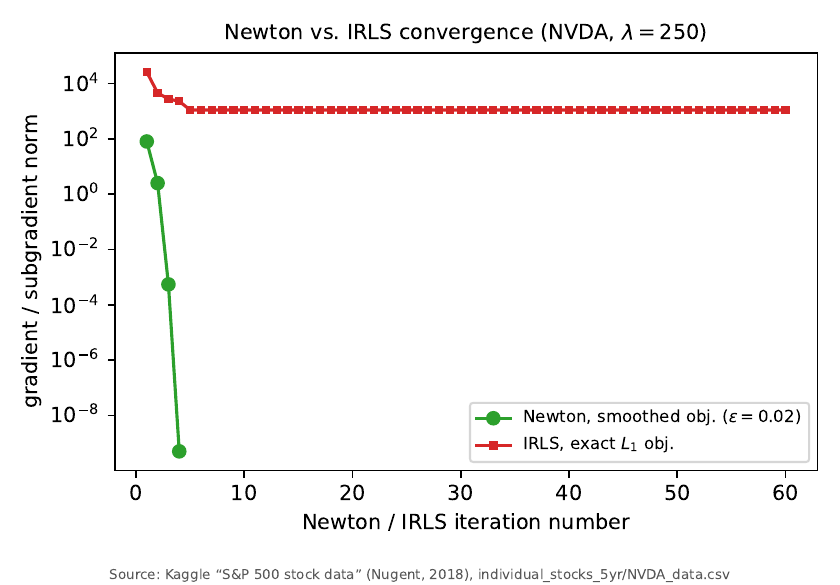}
  \caption{Left: the closed-form mollified absolute value
  $\tilde\rho_\varepsilon$ of Proposition~\ref{prop:mollified} for
  several $\varepsilon$ (a purely mathematical construction,
  independent of any data set). Right: Newton's method on the
  Convolutional LASSO objective ($\varepsilon=0.02$, NVDA data,
  $\lambda=250$; data source: Kaggle ``S\&P~500 stock data''
  \citep{camnugent2018kaggle}, file \texttt{NVDA\_data.csv}) converges
  in $4$ iterations to gradient norm $<10^{-9}$, versus $60$ IRLS
  iterations that do not reach comparably small (sub)gradient norm on
  the exact, non-smooth $\ell_1$ objective.}
  \label{fig:convergence}
\end{figure}

Figure~\ref{fig:fulltrace} shows the extracted trends on the full
five-year log-price series together with a close-up around the
2016-11-11 earnings jump. Unlike the purely quadratic $L_2$ (HPF)
trend, which is a single smooth curve threading through the entire
series, the exact $\ell_1$ and Convolutional LASSO trends are
visibly piecewise linear, with a small number of clearly located
slope changes -- most strikingly the sharp kink immediately after the
November 2016 earnings jump, where the growth rate of the trend
permanently steepens. This difference is quantified directly in
Figure~\ref{fig:diff}, which plots $\bx_{\text{$\ell_1$/ConvLASSO}}-
\bx_{L_2}$ over the full series: both the exact $\ell_1$ solution and
the Convolutional LASSO oscillate around the smooth HPF trend with
visibly larger excursions exactly in the three intervals bracketed by
the detected breakpoints (mid-2015, early-2016, and around the
November 2016 earnings event) -- i.e., the sparse penalty
automatically concentrates its departure from the smooth $L_2$ trend
on the periods during which NVIDIA's growth trajectory itself
changed, a property directly usable for automatic, data-driven
flagging of candidate structural-break dates in a financial time
series, without any hand-tuned event list.

\begin{figure}[htbp]
  \centering
  \includegraphics[width=0.82\linewidth]{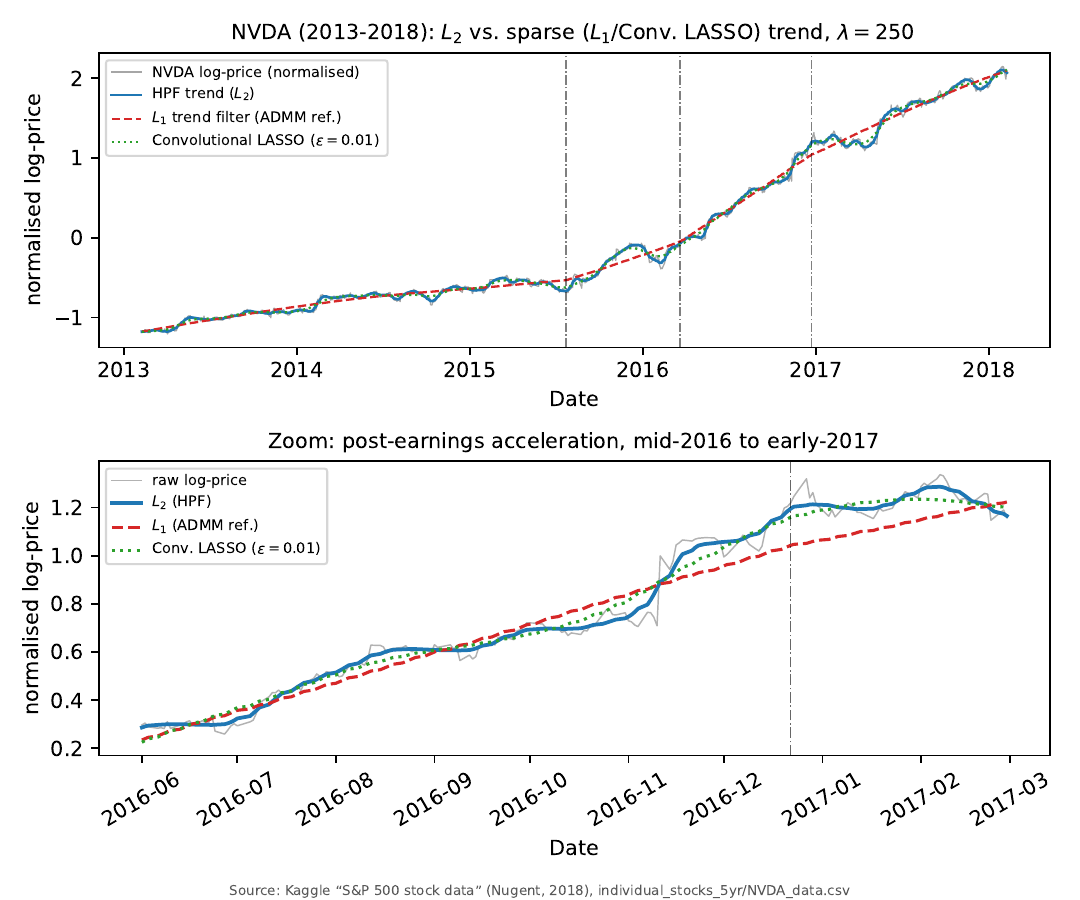}
  \caption{Top: full NVDA log-price series (2013--2018) with the
  $L_2$ (HPF), exact $\ell_1$ (ADMM reference), and Convolutional
  LASSO ($\varepsilon=0.01$) trends, $\lambda=250$; dash-dotted lines
  mark the three detected breakpoints (2015-07-23, 2016-03-19,
  2016-12-22). Bottom: close-up, June 2016--March 2017, showing the
  sharp post-earnings kink. Data source: Kaggle ``S\&P~500 stock
  data'' \citep{camnugent2018kaggle}, file \texttt{NVDA\_data.csv}
  (also noted directly on the figure).}
  \label{fig:fulltrace}
\end{figure}

\begin{figure}[htbp]
  \centering
  \includegraphics[width=0.7\linewidth]{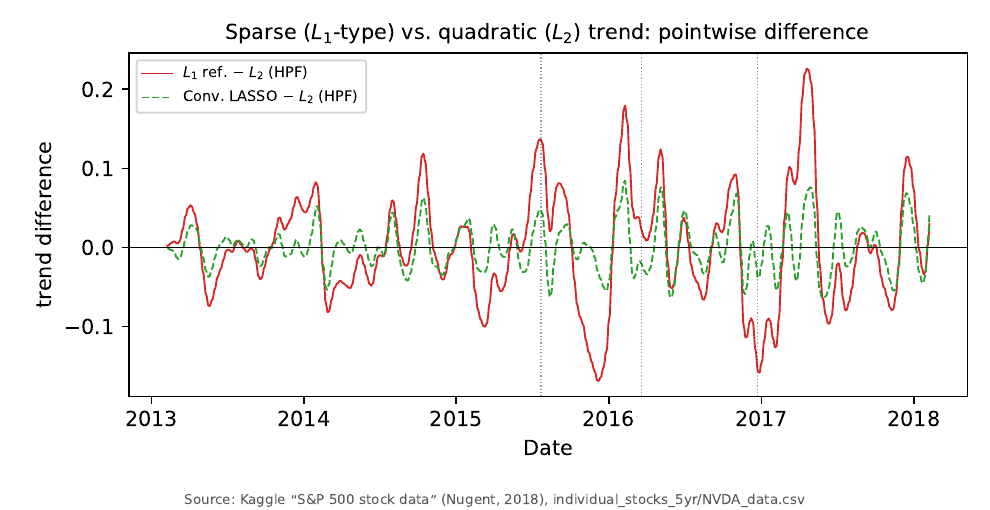}
  \caption{Pointwise difference between the sparse ($\ell_1$ /
  Convolutional LASSO) and quadratic ($L_2$/HPF) trends. Both sparse
  estimates depart most from the smooth HPF trend around the three
  detected growth-regime breakpoints (dotted vertical lines). Data
  source: Kaggle ``S\&P~500 stock data'' \citep{camnugent2018kaggle},
  file \texttt{NVDA\_data.csv} (also noted directly on the figure).}
  \label{fig:diff}
\end{figure}

\section{Discussion and open problems}
\label{sec:discussion}

This paper has given a self-contained treatment of two related ideas
from the source presentation. First, the Hodrick--Prescott filter
admits a Penrose-parametrized family of exactly equivalent
representations, all giving the same trend (Proposition~\ref{prop:invariance})
but disagreeing, in a precisely quantified $\bQ_\lambda$-weighted
sense (Proposition~\ref{prop:divergence}), whenever the same
coefficient vector is queried inconsistently across representations;
this disagreement provably vanishes, together with the gap between
the HPF trend and the naive OLS linear trend, as $\lambda\to\infty$
(Proposition~\ref{prop:limit}). Second, the non-smooth $\ell_1$
generalization of the HPF can be made Newton-solvable, without
sacrificing its qualitative sparsity/kink-detecting behaviour, by
mollifying the absolute-value penalty with an explicit, closed-form,
biweight-kernel-based smoothing (Proposition~\ref{prop:mollified}); on
a real semiconductor-sector stock-price series this Convolutional LASSO recovers the
exact $\ell_1$ trend filter to a controllable, $O(\varepsilon)$
accuracy while converging in a handful of Newton iterations instead
of hundreds of IRLS iterations.

Two open problems, inherited directly from the source presentation,
remain:

\begin{enumerate}
\item \textbf{Regularization parameter vs.\ trend geometry.} We have
  characterized the $\lambda\to\infty$ endpoint of the HPF trend
  family (Proposition~\ref{prop:limit}) but not the full geometric
  path $\lambda\mapsto\bx_{\mathrm{opt}}(\lambda)$; a natural next
  step is a differential-geometric or algebraic-statistics account of
  how the curvature, number of inflection points, or spectral content
  of the extracted trend evolves along this path, in the spirit of
  the SVD-based proof of Proposition~\ref{prop:limit}, which already
  shows the path is a simple shrinkage of $\bd$'s coordinates in the
  $\bD$-singular basis governed by $1/(1+\lambda\sigma_k^2)$.
\item \textbf{Scope of the Convolutional LASSO.} We used a single
  mollifier (biweight) and a single penalty (curvature $\ell_1$); the
  same mollification strategy applies verbatim to other non-smooth
  penalties (total variation on the signal itself, group-sparsity
  penalties, hinge-type losses), to vector- or matrix-valued
  generalizations of the HPF (e.g.\ 2-D image trend filtering), and
  raises the question -- left open here -- of an optimal,
  data-adaptive choice of $\varepsilon$ (e.g.\ decreasing $\varepsilon$
  along a homotopy/continuation path, warm-starting each Newton solve
  from the previous $\varepsilon$'s solution) that would provably
  recover the exact $\ell_1$ trend filter in the $\varepsilon\to0$
  limit while retaining fast local convergence throughout.
\end{enumerate}

On the applied side, the financial experiment in
Section~\ref{sec:conv-experiment} is illustrative rather than
exhaustive: it uses a single stock over a single five-year window.
Structural-break detection in asset-price and macroeconomic trends is
an active area of econometrics in its own right, and a systematic
comparison of the Convolutional LASSO against specialized
changepoint-detection and regime-switching methods, across a
cross-section of stocks (e.g.\ the other semiconductor names
available in the same public data set -- INTC, AMD, TXN, MU, QCOM,
AVGO, AMAT, LRCX, KLAC, ADI, MCHP -- see
Section~\ref{sec:conv-experiment}'s data source) and multiple market
cycles, is a natural direction for future work.

\section{Conclusion}

We have given complete proofs and closed-form derivations for the
dual (generalized-inverse) representation theory of the
Hodrick--Prescott filter sketched in the source presentation
\citep{yoshizawa2017gotemba}, correcting and making precise the
representation-divergence formula (Proposition~\ref{prop:divergence}),
and we have turned its companion idea -- smoothing the $\ell_1$ trend
filter by convolution -- into an explicit, closed-form, provably
convex and three-times differentiable Convolutional LASSO
(Proposition~\ref{prop:mollified}), solved efficiently by a
Levenberg--Marquardt-damped Newton method. Every theoretical claim
was verified numerically, and the Convolutional LASSO was benchmarked
against a standard ADMM/IRLS $\ell_1$ trend-filter solver on real,
publicly available daily closing-price data for the semiconductor
stock NVIDIA Corporation (2013--2018), where it reproduces the exact
$\ell_1$ solution to a few percent while converging an order of
magnitude faster in iteration count, and where the resulting sparse
trend automatically isolates three growth-regime breakpoints,
including the date immediately following the stock's largest
single-session earnings-driven move in the sample.

\bibliographystyle{plainnat}

\appendix
\section{Reproducibility}
All numerical experiments were carried out in Python (NumPy, SciPy
sparse linear algebra, pandas, SymPy for the symbolic derivation of
Proposition~\ref{prop:mollified}) and are fully reproducible from the
single, publicly distributed data file
\texttt{individual\_stocks\_5yr/NVDA\_data.csv} contained in the
Kaggle ``S\&P 500 stock data'' set \citep{camnugent2018kaggle}; no
proprietary or access-restricted data were used. Source code
implementing Algorithm~\ref{alg:newton}, the ADMM/IRLS reference
solvers, and all figures in this paper is available from the author
on request.

\end{document}